\documentclass[11pt,reqno]{amsart}
\usepackage[margin=1in]{geometry}
\usepackage{graphicx}
\usepackage{amssymb}
\usepackage{amsmath}
\usepackage{amsthm}
\usepackage{comment}
\usepackage{amsrefs}
\usepackage{physics}
\usepackage[shortlabels]{enumitem}
\usepackage{hyperref}
\usepackage{mathrsfs}

\usepackage[textsize=scriptsize,backgroundcolor=white]{todonotes}

\newtheorem{thm}{Theorem}

\newtheorem{prop}[thm]{Proposition}
\newtheorem{lem}[thm]{Lemma}

\theoremstyle{definition}
\newtheorem{defn}[thm]{Definition}
\newtheorem{exm}[thm]{Example}

\theoremstyle{remark}
\newtheorem{rk}[thm]{Remark}

\newcommand{\vertiii}[1]{{\left\vert\kern-0.25ex\left\vert\kern-0.25ex\left\vert #1\right\vert\kern-0.25ex\right\vert\kern-0.25ex\right\vert}}

\newcommand{\ppfxHil}{\mathbb{H}}
\newcommand{\ppHil}{H}
\newcommand{\ppfxCov}{\mathbb{K}}
\newcommand{\ppBall}{\mathbb{B}}
\newcommand{\ppAlg}{\mathcal{A}}
\newcommand{\ppKer}{\mathcal{K}}
\newcommand{\ppWFunc}{\Lambda}
\newcommand{\ppCone}{\mathcal{C}}

\newcommand{\Hol}[0]{ \mathfrak{Hol}}

\begin{document}

\title{On Algebras of Functions over Infinite Dimensions}

\author[D.\ Giannakis]{Dimitrios Giannakis}
\author[M.~J.~Latifi Jebelli]{Mohammad Javad Latifi Jebelli}
\author[M.\ Montgomery]{Michael Montgomery}
\address{Department of Mathematics, Dartmouth College, Hanover, New Hampshire, USA}
\email{mohammad.javad.latifi.jebelli@dartmouth.edu}

\maketitle

\begin{abstract}
We introduce a family of reproducing kernel Hilbert spaces $\mathcal A_\Lambda$ of holomorphic functions defined on an infinite--dimensional domain in a separable Hilbert space, $\ppfxHil$. The reproducing kernel of $\mathcal A_\Lambda$ is constructed using the covariance operator associated with a Gaussian measure on $\ppfxHil$, along with a holomorphic function $\Lambda$ on the unit disk. 
Under certain conditions on the kernel, $\mathcal A_\Lambda$ is closed under pointwise multiplication, giving it the structure of a reproducing kernel Hilbert algebra (RKHA). 
We also study twisted canonical commutation relations on these RKHAs, where the creation and annihilation operators are both bounded.
\end{abstract}

\section{Introduction}

We construct and study a family of function spaces on an infinite-dimensional Hilbert space that are simultaneously reproducing kernel Hilbert spaces (RKHSs) and Banach algebras with respect to pointwise multiplication. 

\subsection*{Cone in the Wiener algebra} Our construction of reproducing kernels depends on the choice of a function in the `positive cone' $\ppCone^{W}$  of the Wiener algebra of complex-valued functions on the unit circle with absolutely summable Fourier coefficients. 
Let $\Delta = \{ z\in \mathbb{C}: |z| < 1 \}$. The holomorphic part of the Wiener algebra is given by $W(\Delta) = \{ \sum_{n=0}^{\infty} \lambda_n z^n: \sum_{n=0}^{\infty} |\lambda_n|<\infty \}$. In other words, $W(\Delta)$ consists of holomorphic functions on $\Delta$ with absolutely summable coefficients $\lambda_n$ (where absolute summability guarantees uniform convergence on $\bar \Delta$). We define $ \ppCone^W$ to be the set of all functions $\sum_{n=0}^{\infty} \lambda_n z^n$ in $W(\Delta)$ with $\lambda_n \geq 0$, $n=0,1,\dots$. The subset $ \ppCone^W$ defines a \textit{positive cone} since $\ppCone^W+\ppCone^W\subset \ppCone^W$, $r \ppCone^W \subset \ppCone^W$ for any $r\geq 0$, and $\ppCone^W \cap -\ppCone^W = \{0\}$. Let $\leq$ be the order structure on $W(\Delta)$ induced from the positive cone $\ppCone^W$. We define an element $a\in \ppCone^W$ to be \textit{essential} if $a^2 \leq C a$ for some positive constant $C$. We denote by $\ppCone_{ess}^W \subset \ppCone^W$ the set of all essential elements of $\ppCone^W$. 

\subsection*{Reproducing kernel Hilbert space} Let $\ppfxHil$ be an infinite-dimensional, separable Hilbert space. We fix a positive-definite trace class operator $\ppfxCov \colon \ppfxHil \to \ppfxHil$ on $\ppfxHil$. The set
\begin{equation}\label{eq:def-B}
    \ppBall = \{ \xi\in \ppfxHil : \|\ppfxCov^{-1/2}\xi \| < 1 \} \subset \ppfxHil
\end{equation}
has compact closure, $\bar\ppBall$, with respect to the norm topology of $\mathbb H$.
 
Throughout the paper $\ppfxHil$ refers to the same fixed Hilbert space unless stated otherwise, and $ \ppBall$ and $\ppfxCov$ are fixed. We also assume that $\xi_1, \xi_2, \dots $ is an orthonormal eigenbasis for $\ppfxCov$ with eigenvalues $k_1^2, k_2^2, \dots$, respectively.

Given $\ppWFunc \in \ppCone^W$, define $\ppKer_{\ppWFunc}\colon \bar\ppBall \times \bar\ppBall \rightarrow \mathbb{C}$ to be the kernel function
\begin{equation*}
    \ppKer_{\ppWFunc}(\xi,\eta) = \ppWFunc (\langle \xi, \ppfxCov^{-1} \eta \rangle ),
\end{equation*}
where $\langle \cdot, \cdot \rangle$ is the inner product of $\ppfxHil$.
Later we will see that $\mathcal K_\Lambda$ is continuous and the expansion $\ppKer_{\ppWFunc}(\xi,\eta) = \sum_{n=0}^{\infty} \lambda_n \langle \xi, \ppfxCov^{-1} \eta \rangle^n$ is uniformly convergent for $\xi,\eta \in \bar\ppBall$. One also readily verifies that $\mathcal K_\Lambda$ is positive-definite and thus induces an RKHS, $\mathcal A_\Lambda$, of (continuous) complex-valued functions on $\bar\ppBall$.

The family $\{\mathcal K_\Lambda \}_{\Lambda \in \mathcal C^W}$ contains all holomorphic 
kernels that are invariant under the unitary group with respect to the inner product $\langle \xi , \ppfxCov^{-1} \eta \rangle$. These are analogues of RKHSs on finite-dimensional domains with zonal reproducing kernels; e.g., \cite{Gneiting13}. 

\subsection*{Gaussian fields}
 In the absence of Lebesgue measures, Gaussian measures are natural measures in the infinite--dimensional setting. Let $L^2(\mathbb H, \mu)$ be the space of square-integrable functions on $\mathbb{H}$ with respect to a background Gaussian measure $\mu$ on $\mathbb{H}$. We assume that $\ppfxCov \colon \ppfxHil \to \ppfxHil$ is the trace class covariance operator associated with $\mu$. 
 
Gaussian measures play a crucial rule in the theory of stochastic processes. The Cameron--Martin Hilbert space $\ppfxHil_{\mu} \subset \ppfxHil$ is defined via the covariance structure of the Gaussian measure $\mu$ and the pair $(\mathbb{H}_{\mu}, \mathbb{H})$ is called an abstract Wiener space. The interested reader is referred to \cite{Bogachev1998} for Gaussian measures on infinite dimensions, Cameron--Martin Hilbert spaces, and their application in stochastic processes. The space $L^2(\mathbb H, \mu)$ contains as a closed subspace the infinite-dimensional Segal--Bargmann space, $\mathcal{H}L^2(\mathbb{H}, \mu)$, consisting of holomorphic functions of infinitely many variables associated with eigenvectors of $\ppfxCov$. Infinite-dimensional Segal--Bargmann spaces have been studied extensively by many authors; see \cite{MR1770752} for an overview of the subject. 

Spaces of square-integrable functions on an infinite-dimensional background, $\mathbb{H}$, also arise naturally in rigorous treatments of quantum field theory. Associated to a Gaussian measure $\mu$ on $\mathbb{H}$ one defines a Gaussian field $\Phi: \mathbb{H}_{\mu} \rightarrow L^2(\mathbb{H}, \mu)$ as a family of Gaussian random variables in $L^2(\mathbb{H}, \mu)$ indexed by vectors in $\mathbb{H}_{\mu} \subset \mathbb{H}$ with the property that $\langle f, g \rangle_{\mathbb{H}_{\mu}} = \langle \Phi(f), \Phi(g) \rangle_{L^2(\mathbb{H}, \mu)}$. Here, the map $\Phi$ is defined by $\Phi (h) = \langle h, \cdot \rangle_{\mathbb{H}} \in L^2(\mathbb{H}, \mu)$. In the context of Euclidean quantum field theory, Gaussian fields are known as free fields and they describe systems without interactions. Interacting systems are modeled as non-Gaussian fields $\Phi$ through various regularization techniques. In some cases, the regularization process can be performed in a mathematically rigorous manner. See \cite{Simon1974} for a treatment of Gaussian processes in constructive quantum field theory. 

Given the inclusion $\ppfxHil_{\mu} \subset \ppfxHil$, the set $\ppBall$, defined in (\ref{eq:def-B}), can be written as the unit ball in the Cameron-Martin Hilbert space. i.e. $\ppBall = \{ \xi \in \ppfxHil_{\mu}: \| \xi \|_{\ppfxHil_{\mu}} < 1 \}$. As a result, a function $f \in \ppAlg_{\ppWFunc}$ can be considered as a function on the unit ball of $\ppfxHil_{\mu}$.  Below, we show that every such RKHS $\mathcal A_\Lambda$ consists of holomorphic functions on this unit ball $\mathbb B$. Furthermore, in section \ref{sec:integral-rep},  we show existence of a Gelfand triple $\hat \ppAlg_{\Lambda} \subset \mathcal{H}L^2 (\ppfxHil, \mu) \subset  \ppAlg_{\Lambda}$. Here $\mathcal{H}L^2 (\ppfxHil, \mu)$ is the holomorphic subspace of $L^2 (\ppfxHil, \mu)$, and the inclusion $\mathcal{H}L^2 (\ppfxHil, \mu) \subset  \ppAlg_{\Lambda}$ is given by restricting a holomorphic function on $\ppfxHil$ to $\bar{\ppBall}$. 
In the context of stochastic processes, such a triple defines what are known as generalized Brownian functionals, where elements of $\ppAlg_{\Lambda}$ are the analogs of generalized functions. This traces back to work of Hida \cite{Hida1976ANALYSISOB} and has applications in stochastic partial differential equation theory \cite{Kuo1983BrownianFA,STREIT198455}.

\subsection*{Reproducing kernel Hilbert algebras}
In \cite{GiannakisMontgomery25}, the notion of a reproducing kernel Hilbert algebra (RKHA) was defined as an RKHS with additional coalgebra structure.

\begin{defn}
    \label{def:rkha} An RKHS $\ppAlg$ on a set $X$ with reproducing kernel $k \colon X \times X \to \mathbb C$ is an RKHA if it admits a bounded comultiplication operator $\Delta \colon \ppAlg \to \ppAlg \otimes \ppAlg$ satisfying $\Delta k(x,\cdot) = k(x,\cdot) \otimes k(x, \cdot)$ for all $x \in X$. 
\end{defn}

One verifies that the adjoint of $\Delta$ from Definition~\ref{def:rkha} is a (bounded) operator that implements pointwise function multiplication, $\Delta^*(f \otimes g) = fg$. This makes $\ppAlg$ a Banach algebra with respect to a norm induced by the operator norm of multiplication operators on $\ppAlg$; see \cite[Section~2]{GiannakisMontgomery25}.

\subsection*{RKHAs of holomorphic functions}
In this paper, we build RKHAs of holomorphic functions in the family $\mathcal A_\Lambda$. We prove that if $\ppWFunc \in \ppCone^W_{ess}$ then  $\ppAlg_{\ppWFunc}$ is an RKHA of holomorphic functions.

\begin{thm}\label{thm:main}
    Let $\ppWFunc \in \ppCone^W$. Then $\mathcal A_\Lambda$ is an RKHS of functions on the set $\bar \ppBall$ such that:
    \begin{enumerate}[(a)]
        \item Elements of $\mathcal A_\Lambda$ are holomorphic on $\ppBall \subset \mathbb H_\mu$.
        \item $\mathcal A_\Lambda$ is invariant  under the action of the unitary group of $\ppfxHil_{\mu}$.
        \item If $\Lambda \in \ppCone^W_{ess}$ then $\mathcal A_\Lambda$ is a unital RKHA. 
    \end{enumerate}
\end{thm}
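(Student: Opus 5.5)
The plan is to pass to Cameron--Martin coordinates, where the kernel becomes a zonal kernel on the unit ball of a separable Hilbert space, and to build explicit orthonormal bases from monomials. Write $x_j = \langle \xi, \xi_j\rangle/k_j$ for $\xi \in \bar\ppBall$. Then $\langle \xi, \ppfxCov^{-1}\eta\rangle = \sum_j x_j \bar y_j$ and $\sum_j |x_j|^2 = \|\ppfxCov^{-1/2}\xi\|^2 \le 1$. Set $u = \langle \xi, \ppfxCov^{-1}\eta\rangle$; Cauchy--Schwarz in $\ppfxHil_\mu$ gives $|u|\le 1$ on $\bar\ppBall\times\bar\ppBall$. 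Since $\sum\lambda_n<\infty$, the series $\sum_n \lambda_n u^n$ converges uniformly there. Each $u^n$ is a positive-definite kernel, being a Schur power of the Gram kernel, and $\lambda_n\ge 0$, so $\ppKer_\ppWFunc$ is a positive-definite kernel. The multinomial expansion gives
\[ \ppKer_\ppWFunc(\xi,\eta)=\sum_{\alpha}\lambda_{|\alpha|}\binom{|\alpha|}{\alpha} x^\alpha \bar y^{\alpha}, \]
summed over finitely supported multi-indices $\alpha$. Hence the functions $e_\alpha = (\lambda_{|\alpha|}\binom{|\alpha|}{\alpha})^{1/2} x^\alpha$, for $\alpha$ with $\lambda_{|\alpha|}>0$, form an orthonormal basis of $\ppAlg_\ppWFunc$. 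This is the standard fact that an absolutely convergent expansion of the kernel into linearly independent functions yields an orthonormal basis.

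For (a), every $f\in\ppAlg_\ppWFunc$ can be written $f=\sum_n f_n$. Here $f_n$ is the orthogonal projection onto the span of the $e_\alpha$ with $|\alpha|=n$, and it is a continuous $n$-homogeneous polynomial on $\ppfxHil_\mu$. The reproducing property gives $|f_n(\xi)|\le \|f_n\|\,(\lambda_n\|\xi\|_{\ppfxHil_\mu}^{2n})^{1/2}$. On each ball $\|\xi\|_{\ppfxHil_\mu}\le r<1$ this is bounded by $\|f\| \sqrt{\lambda_n}\, r^n$, so the series converges uniformly. A locally uniform limit of continuous polynomials is holomorphic, in the sense of being G\^ateaux holomorphic and locally bounded. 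Continuity of $f$ on $\bar\ppBall$ follows from continuity of the kernel.

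For (b), take $U$ unitary on $\ppfxHil_\mu$. Then $U$ preserves $\bar\ppBall$ and $\langle U\xi, \ppfxCov^{-1}U\eta\rangle = \langle \xi,\ppfxCov^{-1}\eta\rangle$, so $\ppKer_\ppWFunc(U\xi,U\eta)=\ppKer_\ppWFunc(\xi,\eta)$. The map $f\mapsto f\circ U$ sends $\ppKer_\ppWFunc(\eta,\cdot)$ to $\ppKer_\ppWFunc(U^{-1}\eta,\cdot)$ and preserves the inner products between kernel sections. It therefore extends to a unitary on $\ppAlg_\ppWFunc$.

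Part (c) is the main step. Define $\Delta$ on kernel sections by $\Delta k_\xi = k_\xi\otimes k_\xi$, where $k_\xi=\ppKer_\ppWFunc(\xi,\cdot)$. This map is bounded exactly when the kernel $\ppKer_\ppWFunc^2 = \ppKer_{\ppWFunc^2}$ (the kernel of $\ppAlg\otimes\ppAlg$ restricted to pairs $k_\xi\otimes k_\xi$) is dominated by $C\,\ppKer_\ppWFunc$ as positive kernels. Indeed, $\|\Delta\sum c_i k_{\xi_i}\|^2=\sum c_i\bar c_j \ppKer_\ppWFunc(\xi_j,\xi_i)^2$, and boundedness requires this to be at most $C\sum c_i\bar c_j\ppKer_\ppWFunc(\xi_j,\xi_i)$. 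Essentiality gives $\ppWFunc^2\le C\ppWFunc$ coefficientwise, so $C\ppWFunc-\ppWFunc^2\in\ppCone^W$. Composing with the Gram kernel, as in the first paragraph, shows that $C\ppKer_\ppWFunc-\ppKer_\ppWFunc^2$ is a positive-definite kernel. This is exactly the required inequality, so $\Delta$ extends to a bounded operator with $\|\Delta\|\le\sqrt C$. The subtle points are well-definedness, which follows from the same inequality since the $k_\xi$ may be linearly dependent, and density of the span of kernel sections. For unitality, the constant function $1$ must lie in $\ppAlg_\ppWFunc$, which needs $\lambda_0>0$. This follows from essentiality when $\ppWFunc\ne 0$. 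If $m$ is the smallest index with $\lambda_m>0$, then the coefficient of $z^{2m}$ in $\ppWFunc^2$ is $\lambda_m^2>0$. So $\lambda_{2m}>0$, and applying the same argument to $z^{2m+2k}$ bounds things from below. More directly, comparing lowest-order terms, $\ppWFunc^2\le C\ppWFunc$ forces $2m\ge m$ with the minimal index of $\ppWFunc$ at most $2m$; I would argue instead that the coefficients $\lambda_{2^j m}$ satisfy $\lambda_{2^{j}m}\ge \lambda_{2^{j-1}m}^2/C$. That gives infinitely many positive terms but is not the point. The cleaner route is to check that $1=e_0/\sqrt{\lambda_0}$ is needed only if $\lambda_0>0$; otherwise, I expect the paper either assumes $\lambda_0>0$ or shows that $m=0$ is forced. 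I would verify this detail carefully, as it is where I expect the main obstacle.
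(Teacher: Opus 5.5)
Your arguments for (a), (b), and for the RKHA property in (c) are correct. For (c) you take a different route from the paper.

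\textbf{Comparison with the paper.} The paper never checks the comultiplication directly. It realizes $\mathcal A_\Lambda$ as the completion $\mathcal Q_w$ of polynomials with $\|z^I\|^2=w(|I|)^2 I!\,k^{2I}$, where $\lambda_n^{-1}=n!\,w(n)^2$. It then shows that $\theta\colon\xi^{\vee I}\mapsto z^I$ is an isometric isomorphism from the weighted Fock space $F_w(\mathbb H_\mu^*)$ that intertwines $\vee$ with pointwise multiplication. Finally it imports the algebra property from \cite[Theorem~9]{GiannakisEtAl25}, using subconvolutivity of $(\lambda_n)$.

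You instead get boundedness of $\Delta$ directly from $C\mathcal K_\Lambda-\mathcal K_\Lambda^2=\mathcal K_{C\Lambda-\Lambda^2}\succeq 0$. This is self-contained, gives $\|\Delta\|\le\sqrt C$, and does not need every $\lambda_n$ to be positive; the paper's $w$ is only defined when they are. The paper's route buys the Fock-space picture that it reuses for the Gelfand triple and the twisted CCRs.

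Your treatment is also more careful in two places. The paper does not spell out (b) at all. And your estimate $|f_n(\xi)|\le\|f_n\|\sqrt{\lambda_n}\,\|\xi\|^n_{\mathbb H_\mu}$ is what actually shows $f_n\in\mathcal{BP}_n$, since a homogeneous component is an infinite sum of monomials.

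One small correction. The ``standard fact'' you cite needs $\ell^2$-independence of the $e_\alpha$, not mere linear independence: $\sum_\alpha c_\alpha e_\alpha\equiv 0$ with $(c_\alpha)\in\ell^2$ must force $c=0$. Here this follows by separating degrees via $\xi\mapsto t\xi$ and then restricting to finite-dimensional coordinate subspaces.

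\textbf{The gap: unitality.} Your assertion that essentiality forces $\lambda_0>0$ whenever $\Lambda\neq 0$ is false, so that derivation cannot be completed. If $m$ is the least index with $\lambda_m>0$, the coefficient of $z^{2m}$ only yields $\lambda_m^2\le C\lambda_{2m}$.

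Concretely, take $\Lambda=\Lambda_{\tau,p}-1$, with coefficients $\mu_0=0$ and $\mu_n=e^{-\tau n^p}$ for $n\ge 1$. Then $(\mu*\mu)_0=0$, and for $n\ge1$ we have $(\mu*\mu)_n\le(\lambda*\lambda)_n\le C\lambda_n=C\mu_n$, so $\Lambda\in\mathcal C^W_{ess}$. Yet $\mathcal K_\Lambda(0,0)=\Lambda(0)=0$, so every $f\in\mathcal A_\Lambda$ vanishes at $0$ and $1\notin\mathcal A_\Lambda$. By your own comultiplication argument this space is an RKHA, but not a unital one.

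The missing ingredient is a hypothesis the paper uses tacitly: $\lambda_n>0$ for all $n$. It is built in through $w\colon\mathbb N\to\mathbb R_{>0}$ with $\lambda_n^{-1}=n!\,w(n)^2$. In the paper the unit is the vacuum vector of $F_w(\mathbb H_\mu^*)$, which requires $w(0)<\infty$.

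Under $\lambda_0>0$, unitality is immediate from your basis. Since $e_0=\sqrt{\lambda_0}\,1$, we get $1\in\mathcal A_\Lambda$ with $\|1\|^2=1/\lambda_0$, and it is the identity for pointwise multiplication. State this assumption explicitly rather than trying to derive it from essentiality.
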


For $\tau >0$ and $0<p<1$,  an example of such an algebra is given via 
\begin{gather*}
    \ppWFunc_{\tau,p}(z) = \sum_{n=0}^{\infty} e^{-\tau n^p} z^n, \\
    \ppKer_{\tau,p}(\xi,\eta) =  \ppWFunc_{\tau,p}(\langle \xi, \ppfxCov^{-1} \eta \rangle) ,
 \end{gather*}
where $ \ppWFunc_{\tau,p} \in \ppCone^W_{ess}$ can be deduced from the fact that $\lambda(n)=e^{-\tau n^p}$ is subconvolutive on $\mathbb N$, i.e., $\lambda * \lambda \leq C \lambda$ (see ~\cites{Feichtinger79,Grochenig07}).  Note that the subconvolutive property breaks when $\tau=0$ or $p=1$ but the corresponding spaces $\mathcal A_{\Lambda_{\tau, p}}$ are still RKHSs, with reproducing kernels 
\begin{gather}
    \label{eq:kerneltau0} \ppKer_{0,p}(\xi,\eta) = \frac{1}{1 - \langle \xi, \ppfxCov^{-1} \eta \rangle}, \quad \ppWFunc_{0,p}(z) = \frac{1}{1-z}, \\
    \nonumber \ppKer_{\tau,1}(\xi,\eta) = \frac{1}{1 - e^{\tau}\langle \xi, \ppfxCov^{-1} \eta \rangle}, \quad \ppWFunc_{\tau,1} (z) = \frac{1}{1-e^{-\tau}z}.
\end{gather}
In the case of $\ppfxHil = \mathbb{C}^n$ and $\ppfxCov = I$ the kernel in (\ref{eq:kerneltau0}) reduces to $\ppKer(x,y)=1/(1-x\cdot y)$ and hence $\ppAlg_{\ppWFunc_{0,p}}$ is the Drury--Arveson space on $\mathbb{C}^n$ \cite{Arveson72} (not an algebra anymore), which is the natural generalization of the Hardy space on the disk. From this perspective, for an infinite dimensional $\ppfxHil$, $\ppAlg_{\ppWFunc_{0,p}}$ is the infinite dimensional Drury--Arveson space. The class of RKHSs of holomorphic functions considered here appears in a variety of contexts. For example, \cite{hartz2025weighted} studies the existence of weighted precomposition operators on such spaces.

A natural question arising from Theorem~\ref{thm:main} is the following. What are the necessary and sufficient conditions characterizing all RKHSs of functions on $\ppBall$ that are invariant under the action of the unitary group of $\ppfxHil_{\mu}$ and closed under pointwise multiplication?

In what follows, we give a characterization of the algebra $\ppAlg_{\ppWFunc}$ using weights relative to the Wiener chaos expansion of  $L^2(\ppfxHil,\mu)$ (equivalently, a weighted Fock space). Restricting functions of $L^2(\ppfxHil,\mu)$ to $\bar \ppBall$, both $\ppAlg_{\ppWFunc}$ and $L^2(\ppfxHil,\mu)$ are Hilbert spaces of functions over $\bar \ppBall$ and they share an  orthogonal basis. 
 In related work, \cite{GiannakisEtAl25} put forward a procedure for building RKHAs from a generic Hilbert space, $H$, using a weighted symmetric Fock space construction. Given a weight $w\colon \mathbb N \to \mathbb R_{>0}$ \cite{GiannakisEtAl25} defines $\mathbf{F}_w(H)$ as the closure of the symmetric tensor algebra of $H$ with respect to the inner product $\langle \cdot, \cdot \rangle$ satisfying $\langle \xi_1 \vee \cdots \vee \xi_n, \eta_1 \vee \cdots\vee \eta_n \rangle = w(n)^{-2} \prod_{i=1}^n \langle \xi_i, \eta_i \rangle$ for $\xi_i, \eta_i \in H$.  It is shown \cite[Theorem~9]{GiannakisEtAl25} that if $w^{-2}$ is summable and subconvolutive, $w^{-2} * w^{-2} \leq C w^{-2}$ for a constant $C$, $\mathbf{F}_w(H)$, is a unital Banach algebra with respect to the symmetric tensor product, $\vee$, and a norm equivalent to the Hilbert space norm.  By Gelfand duality, $\mathbf{F}_w(H)$ is isomorphic to a space of continuous functions on its character spectrum, $\sigma(\mathbf{F}_w(H)) \cong \{ \langle \cdot, \xi \rangle: \lVert \xi \rVert \leq R_w\}  \subset H^*$, under the weak-$^*$ topology, where $R_w \in \mathbb{R}_+$. This function space, denoted by $\hat{\mathbf{F}}_w(H)$, turns out to be a unital RKHA.

For compatibility with the algebras $\mathcal A_\Lambda$, our convention for weighted Fock spaces in this paper, denoted by $F_w(H)$, differs from that of \cite{GiannakisEtAl25} by an additional factor. In section~\ref{sec:proof} below we show that the RKHA $\mathcal A_\Lambda$ is isomorphic to the weighted Fock space $ F_w(\mathbb H_\mu^*)$ generated by a space of distributions, $\mathbb H^*_\mu$, on the Cameron--Martin Hilbert space $\mathbb H_\mu \subset \mathbb H$, for the Gaussian measure $\mu$ associated with $\mathbb K$ and a weight $w$ induced from $\Lambda$. As a result, we realize $ F_w(\mathbb H_\mu^*)$, which is a space of continuous functions on the weak-$^*$ compact set $\sigma( F_w(\mathbb H_\mu^*)) \subset F_w(\mathbb H_\mu^*)^*$, as a space of holomorphic functions on the open set $\mathbb B \subset \mathbb H_{\mu}$.  

\subsection*{Algebras of functions on finite--dimensional domains}

Let us discuss a few examples of Banach algebras of functions on finite-dimensional domains to further motivate the constructions on infinite-dimensional spaces studied in this paper.

The \textit{Beurling algebra} is an algebra of functions on the circle defined via a submultiplicative weight function $w \colon \mathbb Z \to \mathbb R_+$ in the Fourier domain, $w(m + n) \leq w(m) w(m)$. For $w(0) =1$, the Fourier image of $\ell^1_w(\mathbb Z)$ is a commutative Banach algebra of continuous functions. 
Since the early works of Beurling and Wermer \cites{Beurling38,Wermer54}, these algebras have been studied extensively in the context of harmonic analysis on locally compact abelian groups (LCAs); e.g., \cite{Kaniuth09}.

The \textit{disc algebra} is the space of holomorphic functions on the unit disc $\Delta \subset \mathbb{C}$ with continuous extension to $\Bar{\Delta}$. The disc algebra has structure of a Banach algebra under pointwise multiplication and the supremum norm.

The Beurling and disc algebras both lack an inner product and associated Hilbert space structure. In contrast, \textit{periodic Hilbert spaces} \cites{Babuska68a,Babuska68b} and \textit{harmonic Hilbert spaces} \cite{Delvos97} are spaces of continuous functions on the unit circle and real line, respectively, that are simultaneously reproducing kernel Hilbert spaces (RKHSs) and Banach algebras under pointwise multiplication. Analogs of periodic and harmonic Hilbert spaces have also been studied in more general LCAs \cites{FeichtingerEtAl07,DasGiannakis23,DasEtAl23}. For an LCA $G$, these spaces are typically built as Fourier images of weighted $\ell^2$ spaces on the dual group, $\hat G$, with the weights $w \colon \hat G \to \mathbb R_+$ being subadditive and/or subconvolutive \cites{Feichtinger79,Grochenig07}. For an RKHA on an LCA $G$ given by the Fourier image of $\ell^2_w(\hat G)$, it can be shown \cite{GiannakisMontgomery25} that the weight $w \colon \hat G \to \mathbb R_+$ must be necessarily subconvolutive, $w^{-2} * w^{-2} \leq C w^{-2}$ 
a.e.\ for a constant $C > 0$. This subconvolutive condition is equivalent to the condition $\Lambda^2 \leq C\Lambda,\; \Lambda \in \ppCone^W_{ess}$, used in Theorem~\ref{thm:main}.

\subsection*{Multi-index notation}
We use capital letters $I$ and $J$ for infinite multi-indices of the form $I=(i_1, i_2, \dots),\, J=(j_1, j_2, \dots)$ where $I$ and $J$ have non-negative integer components. We use $\mathcal{I}$ to represent the space of all such multi-indices with only finitely many non-zero components (i.e., $\sum_r i_r< \infty $). The length of $I \in \mathcal I$ is defined by $|I| := \sum_r i_r $ and we also set $I! := i_1 ! i_2 ! \dots $.

\subsection*{Plan of the paper}
The rest of the paper is organized as follows. In Section~\ref{sec:gaussian_measures} we survey the definitions and basic properties of Gaussian measures and associated Cameron--Martin Hilbert spaces. In Section~\ref{sec:segal_bargmann} we describe the construction of the Segal--Bargmann space over an infinite-dimensional Hilbert space and its multiplicative correspondence with a symmetric Fock space associated with the Cameron--Martin Hilbert space. Section~\ref{sec:proof} contains our proof of Theorem~\ref{thm:main}. Section~\ref{sec:further} further explores the integral representation of $\ppAlg_{\ppWFunc}$ as an RKHS and presents a twisted analog of the canonical commutation relations (CCRs) for creation and annihilation operators on $\ppAlg_{\ppWFunc}$. A list of the main symbols used in the paper is included in Table~\ref{tab:symbols}.

\section{Gaussian measures}
\label{sec:gaussian_measures}

In this section, we first review basic concepts in the theory of infinite-dimensional Gaussian measures on $\mathbb{H}$ such as the covariance operator and the Cameron--Martin Hilbert space $\mathbb{H}_{\mu}$. We then study the corresponding space of square-integrable functions and its inner product structure.
In a real setting, many of the statements and definitions of this section can be found in \cite{Bogachev1998}.

\subsection{Definition and examples}

Let \( X \) be a Banach space, and let \( X^* \) denote the space of all continuous linear functionals on \( X \) (we primarily consider the case of a Hilbert space). For a convex set $C$ in $\mathbb C^n$ and a set of linear functionals \( g_1, \dots, g_n \in X^* \), the preimage of  $C$ under the map defined by $(g_1, \dots, g_n)$ is called a cylindrical set in $X$. The cylindrical $\sigma$-algebra \( \Sigma \) is the $\sigma$-algebra generated by all such cylindrical sets. A measure \( \mu \) on \( (X, \Sigma) \) is Gaussian if for every \( g \in X^* \), the pushforward measure \( g_*\mu \) is a one-dimensional Gaussian random variable.

\begin{exm}
    \label{ex:gaussian}Consider $\ell^2(\mathbb{N})$, the space of square summable sequences $x=(x_1, x_2, \dots )$ of complex numbers with $\sum_j |x_j|^2 < \infty$. A Gaussian measure $\mu$ on \( \ell^2(\mathbb{N}) \) can be defined by declaring the \( x_j \) to be independent (complex) Gaussian random variables with variance \( \sigma_j^2 \). In other words, for a standard linear functional $g_j: \ell^2(\mathbb{N}) \rightarrow \mathbb{C}$ with $g_j(x)=x_j$, the one-dimensional push-forward measure $(g_j)_*\mu$ is a Gaussian distribution with variance $\sigma_j^2$. In order for the measure $\mu$ to be $\sigma$-additive we need the $\sigma_j$ to decay to zero sufficiently rapidly so that $\sum_j \sigma_j^2 < \infty$ (alternatively, if $\sigma_j$ does not decay to zero rapidly-enough then $\mu(\ell^2(\mathbb{N}))=0$).
\end{exm}

For a measure $\mu$ on a Banach space $X$ with the cylindrical $\sigma$-algebra $\Sigma$ 
, the Fourier transform $\hat \mu \colon X^* \to \mathbb C$ is defined as $\hat{\mu}(g) = \int_X e^{i g(x)} \, d\mu(x)$. When $\mu$ is Gaussian we have

$$
\hat{\mu}(g_\xi) = e^{-\frac{1}{4} \langle \xi, \ppfxCov \xi\rangle},
$$
where $g_\xi = \langle \xi, \cdot \rangle$
and $\ppfxCov$ is a strictly positive-definite trace-class operator. Let $\{\xi_1, \xi_2, \dots\} $ be an orthonormal basis of $\mathbb H$ that diagonalizes $\ppfxCov$ with eigenvalues $k_j^2>0$; i.e., $\ppfxCov\xi_j = k_j^2 \xi_j$.  Heuristically, we write \
$$
d\mu = \prod_{j=1}^{\infty} \frac{1}{\pi k_j^2} e^{-|z_j|^2/k_j^2} dz_j \propto e^{-\langle \xi, \ppfxCov^{-1} \xi \rangle},
$$
where $z_j(\xi) = \langle \xi_j, \xi\rangle$ is the coordinate system associated with our fixed basis. Compared to Example~\ref{ex:gaussian}, the random variables $g_j$ are eigenvectors of the covariance operator and the eigenvalues $k_j^2$ are equal to the corresponding variances $\sigma_j$.

As in finite dimensions, Gaussian measures in infinite dimensions are completely characterized by their mean and covariance operators. Here, we always consider centered Gaussian measures where the mean is zero. 

\subsection{Cameron--Martin Hilbert space}
Using the covariance operator, $\ppfxCov$, one canonically associates a Hilbert space $ \mathbb{H}_{\mu} $ to every Gaussian measure $\mu$ on $\mathbb{H}$,
 $$
 \mathbb{H}_{\mu} = \{ \xi \in \mathbb{H}: \langle \xi,\ppfxCov^{-1} \xi \rangle < \infty  \} = \{ \xi \in \mathbb{H}: \|\ppfxCov^{-1/2}\xi\|<\infty \}.
 $$
 This subspace has $\mu$-measure zero and is known as the Cameron--Martin Hilbert space (or RKHS in non-probabilistic contexts) 
 
 associated with $\mu$. Note that the natural inclusion map $\mathbf{i} \colon \mathbb{H}_{\mu} \rightarrow \mathbb{H}$ has a corresponding dual map $\mathbf{i}^*\colon \mathbb{H}^* \rightarrow \mathbb{H}_{\mu}^*$, where the continuous dual spaces $\mathbb H^*$ and $\mathbb H_\mu^*$ are defined with respect to the norm of $\mathbb H$. If we identify $\mathbb{H}$ and $\mathbb{H}^*$ via $\xi_j \leftrightarrow \langle \xi_j, \cdot \rangle_{\mathbb{H}}$ we then obtain an inclusion of Hilbert spaces $\mathbb{H}_{\mu} \subset \mathbb{H} \subset \mathbb{H}_{\mu}^*$. The basis vector $\xi_j$ belongs to $\mathbb{H}_{\mu}^*$ via this identification (expressed as $\mathbf{i}^*(\mathbf{i}(\xi_j))$), and it is not hard to verify that $\| \xi \|^2_{\mathbb{H}_{\mu}^*} = k_j$ while $\| \xi \|^2_{\mathbb{H}_{\mu}} = k_j^{-1}$.  Later,
 we will see that applying the Fock space construction to $\mathbb{H}_{\mu}^*$ yields the space of square-integrable functions on $\mathbb{H}$ with respect to the Gaussian measure $\mu$ (see Proposition ~\ref{prop:FockL2}).

\subsection{Moments of Gaussian measures}\label{subsec:gaussian}
In this subsection, we describe the space of square-integrable functions, $L^2(\mathbb H, \mu)$, on the infinite-dimensional domain, 
 \( \mathbb{H} \), with respect to the Gaussian measure $\mu$. A function $f\colon \mathbb{H} \rightarrow \mathbb{C}$ belongs to $L^2(\mathbb H, \mu)$ if and only if $\int_{\mathbb{H}} |f(\xi)|^2 \, d\mu(\xi) < \infty$. We provide a concrete representation of this space in terms of standard monomials on $\mathbb{H}$. We need to determine the $L^2$ norms and the covariance structure associated with these monomials. We begin with the one-dimensional case involving a complex Gaussian random variable.

\begin{lem}
 \label{lem:1d_gaussian}
 Let $f_j: \mathbb{C} \rightarrow \mathbb{C}$, $f_j(s)=s^j$ be  standard monomials over $\mathbb{C}$. Then the covariance $\langle s^j, s^{j'} \rangle_{\mathbb{C}} = \frac{1}{\pi k^2} \int_{\mathbb{C}} \overline{f_j(s)} f_{j'}(s)\, e^{-|s|^2/k^2}\, ds$, $k>0$
, relative to the Gaussian measure with variance $k^2$, is zero if $j\neq j'$. Otherwise, if $j=j'$ we have
    $$
    \| s^j \|_{\mathbb{C}}^2 = \left< s^j, s^{j} \right>_{\mathbb{C}} = \frac{1}{\pi k^2} \int_{\mathbb{C}} |s|^{2j} e^{-|s|^2/k^2}\, ds =  j!\, k^{2j}.
    $$

\end{lem}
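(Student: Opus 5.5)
I will compute the integral directly in polar coordinates, since both the density $e^{-|s|^2/k^2}$ and the monomials factor nicely into a radial part and an angular part. Write $s = re^{i\theta}$ with $r\ge 0$, $\theta\in[0,2\pi)$, so that $ds = r\,dr\,d\theta$ is Lebesgue measure on $\mathbb{C}\cong\mathbb{R}^2$. Then
\begin{equation*}
\overline{f_j(s)}f_{j'}(s)\,e^{-|s|^2/k^2} = r^{j+j'} e^{i(j'-j)\theta} e^{-r^2/k^2}.
\end{equation*}
All these integrands are absolutely integrable, because the Gaussian factor dominates any polynomial growth. Fubini therefore applies, and the covariance factors as
\begin{equation*}
\frac{1}{\pi k^2}\left(\int_0^{2\pi} e^{i(j'-j)\theta}\,d\theta\right)\left(\int_0^\infty r^{j+j'+1}e^{-r^2/k^2}\,dr\right).
\end{equation*}

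The angular factor gives orthogonality. For $j\neq j'$, the integral $\int_0^{2\pi}e^{i(j'-j)\theta}\,d\theta$ vanishes because $j'-j$ is a nonzero integer, so $\langle s^j,s^{j'}\rangle_{\mathbb{C}}=0$. For $j=j'$, the angular integral equals $2\pi$, and what remains is the radial integral $\frac{2}{k^2}\int_0^\infty r^{2j+1}e^{-r^2/k^2}\,dr$.

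To evaluate the radial integral I will substitute $u=r^2/k^2$, so that $du = 2r\,dr/k^2$ and $r^{2j}=k^{2j}u^j$. This turns the expression into $k^{2j}\int_0^\infty u^j e^{-u}\,du = k^{2j}\,\Gamma(j+1) = j!\,k^{2j}$. If one prefers to avoid the Gamma function, the identity $\int_0^\infty u^je^{-u}\,du=j!$ follows by induction on $j$ with integration by parts. The case $j=0$ also confirms that the normalization constant $1/(\pi k^2)$ makes the measure a probability measure.

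There is no real obstacle in this argument. The only points needing care are the justification of Fubini, which is immediate from the integrability noted above, and keeping track of the constants $1/(\pi k^2)$, $2\pi$, and the Jacobian $k^2/2$ from the substitution so that they combine to exactly $1$.
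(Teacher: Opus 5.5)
Your proof is correct and follows essentially the same route as the paper. Both arguments use polar coordinates, let the angular integral (rotational symmetry of the weight) kill the case $j\neq j'$, and substitute in $r^2$ to reduce the diagonal case to $\int_0^\infty t^j e^{-\beta t}\,dt$. The only cosmetic differences are that the paper evaluates this integral as $(-1)^j\partial_\beta^j(1/\beta)=j!/\beta^{j+1}$ with $\beta=1/k^2$ rather than rescaling to $\Gamma(j+1)$, and that your explicit angular computation fills in the orthogonality step the paper only sketches.
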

\begin{proof} First, assuming $j\geq j'$, $\langle s^j, s^{j'} \rangle_{\mathbb{C}}$ is defined by integrating $s^{j-j'}$ with respect to a radially symmetric

    To evaluate $\left< s^j, s^{j} \right>_{\mathbb{C}}$, let $\beta = 1/k^2$.  Writing the moments in polar coordinates we have

    $$
    \frac{1}{\pi k^2} \int_{\mathbb{C}} |s|^{2j} e^{-|s|^2/k^2}\, ds  =\frac{\beta}{\pi} \int_0^{2 \pi} \int_0^{\infty} r^{2 j} e^{-\beta r^2} r \, d r \, d \theta.
    $$
    Now using another change of variables, $s=r^2,$ $ds=2 r \,dr$, the claim of the lemma 
    follows from
\begin{align*}
\int_0^{2 \pi} \int_0^{\infty} r^{2 j} e^{-\beta r^2} r d r d \theta & =\pi \int_0^{\infty} t^j e^{-\beta t} d t \\
& =\pi(-1)^j \frac{\partial^j}{\partial \beta^j} \int_0^{\infty} e^{-\beta t} d t
=\pi(-1)^j \frac{\partial^j}{\partial \beta^j}\left(\frac{1}{\beta}\right) \\
& =\pi(-1)^j \frac{(-1)^j j !}{\beta^{j+1}}
=\frac{\pi j !}{\beta^{j+1}} 
=\pi \;  j!\;  k^{2j+2}. \qedhere
\end{align*}
\end{proof}

Let us now derive covariance formulas in the case of the infinite-dimensional Gaussian measure $\mu$ with covariance operator $\ppfxCov$. Given the orthonormal basis $\xi_1, \xi_2, \dots$ of $\mathbb{H}$ diagonalizing $\ppfxCov$, we consider  $z_j$ as a function on $\mathbb{H}$ defined by $z_j(\xi) = \left<\xi, \xi_j \right>$. More generally, for a multi-index $I=(i_1, i_2, \dots) \in \mathcal{I}$, we have
\begin{equation*}
    z^I(\xi) = \left<\xi, \xi_1\right>^{i_1} \left<\xi, \xi_2\right>^{i_2} \dots, \quad \quad \xi \in \mathbb{H}.
\end{equation*}

A polynomial function on $\mathbb{H}$ has the general form $\sum_{I} c_I z^I$, where the sum is taken over $I \in \mathcal{I}$ and has finitely many nonzero coefficients $c_I$.

\begin{prop}\label{prop:gaussian_covariance}
    Let $\mu$ be a Gaussian measure on $\mathbb{H}$ with covariance operator $\ppfxCov$. Given multi-indices $I,I'$, the covariance expression 
    $$\left< z^I, z^{I'} \right>_{L^2(\mathbb{H}, \mu)} = \int_{\mathbb{H}} \bar z^I(\xi) z^{I'}(\xi) \, d\mu (\xi)  $$
    associated with the measure $\mu$ is zero if $I\neq I'$. Moreover, we have
    $$
    \| z^I \|_{L^2(\mathbb{H}, \mu)} =  \sqrt{I!}\, k_1^{i_1} k_2^{i_2} \dots.
    $$
    \end{prop}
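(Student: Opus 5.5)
The plan is to reduce the infinite-dimensional integral to a finite product of one-dimensional integrals and then apply Lemma~\ref{lem:1d_gaussian} factor by factor. Since $I,I'\in\mathcal I$ have only finitely many nonzero components, choose $N$ with $i_r=i'_r=0$ for all $r>N$. The integrand $\bar z^I z^{I'}$ then depends only on the coordinates $z_1,\dots,z_N$, i.e.\ it is a cylindrical function, the pullback of a function on $\mathbb C^N$ under $P_N\colon\xi\mapsto(\langle\xi,\xi_1\rangle,\dots,\langle\xi,\xi_N\rangle)$. By the change-of-variables formula,
\[
\int_{\mathbb H}\bar z^I z^{I'}\,d\mu=\int_{\mathbb C^N}\prod_{r=1}^N \bar s_r^{\,i_r}s_r^{\,i'_r}\,d\big((P_N)_*\mu\big)(s).
\]

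The key step is to identify $(P_N)_*\mu$. It is a centered Gaussian measure on $\mathbb C^N$, since every linear functional of $P_N$ is an element of $\mathbb H^*$ composed with $\mu$. Compute its Fourier transform from $\hat\mu(g_\xi)=e^{-\frac14\langle\xi,\ppfxCov\xi\rangle}$. For $\xi=\sum_{r\le N}c_r\xi_r$, we have $\langle\xi,\ppfxCov\xi\rangle=\sum_r k_r^2|c_r|^2$, so the characteristic function factorizes as $\prod_r e^{-\frac14 k_r^2|c_r|^2}$. Hence the coordinates $z_1,\dots,z_N$ are independent complex Gaussians, and $z_r$ has density $\frac{1}{\pi k_r^2}e^{-|s|^2/k_r^2}$. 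This matches the heuristic product formula for $d\mu$ stated above. The only care needed is to check that the normalization convention in $\hat\mu$ (the $\tfrac14$ together with $e^{i\,\mathrm{Re}}$ versus complex pairing) is consistent with variance $k_r^2$ in the sense of Lemma~\ref{lem:1d_gaussian}; I take this as the paper's convention. I expect this bookkeeping to be the main, if mild, obstacle; the rest is routine.

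With independence in hand, Fubini gives
\[
\langle z^I,z^{I'}\rangle_{L^2(\mathbb H,\mu)}=\prod_{r=1}^N\frac{1}{\pi k_r^2}\int_{\mathbb C}\bar s^{\,i_r}s^{\,i'_r}e^{-|s|^2/k_r^2}\,ds=\prod_{r=1}^N\langle s^{i_r},s^{i'_r}\rangle_{\mathbb C}.
\]
Integrability holds because each factor is a polynomial against a Gaussian. If $I\ne I'$, then $i_r\ne i'_r$ for some $r$, and that factor vanishes by Lemma~\ref{lem:1d_gaussian}. If $I=I'$, each factor equals $i_r!\,k_r^{2i_r}$. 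The product is $I!\prod_r k_r^{2i_r}$, and taking square roots yields $\|z^I\|=\sqrt{I!}\,k_1^{i_1}k_2^{i_2}\cdots$. The result is independent of the choice of $N$, since factors with $i_r=0$ equal $1$.
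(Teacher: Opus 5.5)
Your proposal is correct and takes the same route as the paper. Both use the product structure of $\mu$ to split the integral into one-dimensional Gaussian integrals over the coordinates $z_r$, then apply Lemma~\ref{lem:1d_gaussian} factor by factor. The paper simply asserts that $\mu$ is a product of independent complex Gaussians, whereas you justify this by passing to the marginal $(P_N)_*\mu$ and reading off independence and the variances $k_r^2$ from the factorized Fourier transform. That is extra rigor the paper omits, and your normalization check does come out consistently: with the $e^{i\,\mathrm{Re}\,g}$ convention, the factor $\tfrac14$ corresponds exactly to the density $\frac{1}{\pi k^2}e^{-|s|^2/k^2}$.
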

\begin{proof}
    Since $\mu$ is the product of independent Gaussian measures, the above integral 
    
    with respect to $\mu$ splits into one-dimensional integrals of $z_j$'s.  Then, the fact that for $j\neq j'$ in Lemma~\ref{lem:1d_gaussian}, $\left< s^j, s^{j} \right>_{\mathbb{C}} =0$  implies that $\left< z^I, z^{I'} \right>_{L^2(\mathbb{H}, \mu)} = 0 $ for $I\neq I'$. 
    Again, using Lemma~\ref{lem:1d_gaussian}, if $I=I'$,  we can write $\| z^I \|^2_{L^2(\mathbb{H}, \mu)}$ as
    $$
    \int_{\mathbb{C}} |z|^{2I} \, d\mu = \prod_{j} \frac{1}{\pi k_j^2} \int_{\mathbb{C}} |z_j|^{2 i_j} e^{-|z_j|^2/k_j^2}\, dz_j = \prod_{j} \left( i_j!\; k_j^{2i_j} \right) =  I!\, k^{2I},
    $$
    where $k_j$ are the eigenvalues of $\ppfxCov$, and $k^{2I} = k_1^{2i_1} k_2^{2i_2}\dots$. 
\end{proof}

\section{Infinite-dimensional Segal--Bargmann space}
\label{sec:segal_bargmann}

\subsection{Holomorphic functions on a Banach space}
\label{sec:holomorphic}

In this section, we provide the definition and preliminary results concerning holomorphic functions on an infinite-dimensional Banach space $X$ (in this paper $X$ is assumed to be the Hilbert space $\mathbb{H}$). Most of the definitions and statements provided in this section can be found in \cite{mujica2010}.  

\newcommand{\ML}[0]{\mathcal{L}}
\newcommand{\BML}[0]{\mathcal{BL}}
\newcommand{\Pol}[0]{\mathcal{P}}
\newcommand{\BPol}[0]{\mathcal{BP}}

\newcommand{\BMLnorm}[1]{\| #1 \|_{\BML}}
\newcommand{\Polnorm}[1]{\| #1 \|_{\BPol}}

We denote by $\ML(X^m)$ the space of all $m$-linear maps from $m$-tuples in $X^m$ to $\mathbb{C}$. We define $\BML(X^m) \subseteq \mathcal L(X^m)$ as the Banach space of all $m$-linear maps which are continuous with respect to the norm
$$
\BMLnorm{q} = \sup_{\|x_j\|<1} |q(x_1, \dots, x_m)|;
$$
 that is, $q \in \BML(X^m)$ iff $\BMLnorm{q} < \infty$. A symmetric $m$-linear form, $q$, satisfies $q(x_{\sigma(1)}, \dots, x_{\sigma(m)}) = q(x_1, \dots, x_m)$ for any permutation $\sigma \in S_m$. The subspace of symmetric maps in $\BML(X^m)$ is denoted by $\BML^s(X^m)$. We use the shorthand notation $q(x_1^i x_2^j)$ for $q(x_1, \dots, x_1, x_2, \dots, x_2)$ where $x_1$ and $x_2$ are repeated $i$ and $j$ times, respectively, and $i+j=m$ 

\begin{defn}
    A map $p \colon X\rightarrow \mathbb{C}$ is an $m$-homogeneous polynomial if $p(x)=q(x^m)=q(x,\dots, x)$ for some $m$-linear map $q \in \ML(X^m)$. 
\end{defn}

We denote the space of $m$-homogeneous polynomials by $\mathcal P_m(X)$. Similarly to $\mathcal{BL}(X^m)$, we define the Banach space $\mathcal B\mathcal P_m(X) \subseteq \mathcal P_m(X)$ of $m$-homogeneous polynomials using the norm $\Polnorm{p} = \sup_{\| x\| \leq 1} |p(x)|$. The map that sends $q \in \BML(X^m)$ to the corresponding $p \in \Pol_m(X)$ induces a topological isomorphism from $\BML^s(X^m)$ to $\mathcal{BP}_m(X)$ \cite[Corollary~2.3]{mujica2010}. Furthermore, one can prove that $\Polnorm{p} \leq \BMLnorm{q} \leq \frac{m^m}{m!} \Polnorm{p}$.  

A polynomial of degree at most $m$ is a function of the form $p(x) = p_1(x) + \dots + p_m(x)$ where $p_j \in \Pol_j (X)$. We can now define power series as follows.

\begin{defn}
\label{def:power_series}
    Let $a\in X$. A power series  centered at $a$ is a series of mappings from $X$ to $\mathbb{C}$ 
    of the form $x \mapsto \sum_{m=0}^{\infty} p_m(x-a)$, $x \in X$, such that $p_m \in \BPol_m (X)$. The radius of convergence, $R$, is defined by $$R = \sup \left\{r\geq 0: \text{$\sum_{m=0}^{\infty} p_m(x-a)$ converges uniformly on $\overline{B(a,r)}$} \right\}. $$ 
\end{defn}

\begin{defn}
\label{def:holomorphic}
    Let $U$ be an open subset of $X$. A function $f \colon U \rightarrow \mathbb{C}$ is holomorphic if for every $a \in U$ there exist $r>0$ and $p_m \in \BPol_m(X)$ such that the power series $f(x) = \sum_{m=0}^{\infty} p_m(x-a)$ converges uniformly on $B(a,r)$. We write $\Hol(U)$  for the space of all holomorphic functions on $U$.  
    
\end{defn}

We now prove that, just as in the finite dimensions, a power series centered at the origin defines a holomorphic function on $U=B(0,R)$, where $R$ is the radius of convergence from Definition~\ref{def:power_series}.

\begin{prop}\label{prop:hol-origin}
    Let $f(x) = \sum_{m=0}^{\infty} p_m(x)$ be a power series centered at the origin with radius of convergence $R>0$. Then $f \in \Hol (U)$ where $U=B(0,R)$. 
\end{prop}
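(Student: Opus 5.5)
The plan is to re-center the series at an arbitrary point of $U$, exactly as in one complex variable, using the polarization/multilinear structure of each $p_m$. Fix $a \in U$, so $\|a\| < R$. Choose $r > 0$ with $\|a\| + r < \rho < R$ for some $\rho$. By definition of the radius of convergence, $\sum_m p_m$ converges uniformly on $\overline{B(0,\rho)}$. Our first step is to upgrade this to a norm estimate $\Polnorm{p_m} \rho^m \leq M$ for all $m$, with some constant $M$. Uniform convergence gives $\sup_{\|x\|\le \rho} |p_m(x)| \to 0$ directly: each $p_m(x)$ is a difference of consecutive partial sums, and those differences tend to zero uniformly. By homogeneity, $\sup_{\|x\|\le\rho}|p_m(x)| = \rho^m \Polnorm{p_m}$, so these quantities are bounded.

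Next, for $x$ near $a$ write $x = a + h$ with $\|h\| < r$. Let $q_m \in \BML^s(X^m)$ be the symmetric form with $p_m(y) = q_m(y^m)$. The binomial expansion for symmetric multilinear forms gives
\[
p_m(a+h) = \sum_{j=0}^m \binom{m}{j} q_m(a^{m-j} h^j).
\]
We then define
\[
P_j(h) = \sum_{m \ge j} \binom{m}{j} q_m(a^{m-j} h^j).
\]
This is a $j$-homogeneous polynomial in $h$, provided the series converges in $\BPol_j(X)$. The key estimate, which I expect to be the main obstacle, is absolute convergence of the double series
\[
\sum_m \sum_{j\le m} \binom{m}{j} |q_m(a^{m-j}h^j)|.
\]
The naive bound $\BMLnorm{q_m} \le \frac{m^m}{m!}\Polnorm{p_m} \approx e^m \Polnorm{p_m}$ loses a factor $e^m$ and only gives convergence on a smaller ball. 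To avoid it, I would bound the mixed term in a sharper way. Apply the one-variable polynomial $\lambda \mapsto p_m(a + \lambda h)$ on a disc $|\lambda| \le t$ with $t$ chosen so that $\|a\| + t\|h\| \le \rho$. Its coefficients are $\binom{m}{j} q_m(a^{m-j} h^j)$, so Cauchy's estimates give
\[
\binom{m}{j}|q_m(a^{m-j}h^j)| \le \frac{M'}{t^j}\,\frac{(\|a\|+t\|h\|)^m}{\rho^m}\cdot\rho^m\Polnorm{p_m}.
\]
A cleaner choice is to estimate via $\sup_{|\lambda|\le t}|p_m(a+\lambda h)| \le \Polnorm{p_m}(\|a\|+t\|h\|)^m$. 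Summing over $j \le m$ then requires $t > 1$, and summing over $m$ requires $(\|a\| + t\|h\|)/\rho < 1$. Both hold if $\|h\| < r$ is small enough, for instance $t = 2$ and $2r < \rho - \|a\|$. The total is then dominated by
\[
\sum_m M\, 2 \left(\frac{\|a\| + 2r}{\rho}\right)^m < \infty.
\]

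Given this absolute bound, uniform on $\|h\| \le r$, we can rearrange the double series. This yields $f(a+h) = \sum_j P_j(h)$, uniformly convergent on $B(a,r)$. The same bound applied with $\|h\| \le r$ shows $\sup_{\|h\|\le r} |P_j(h)| \le C' 2^{-j} r^0$-type finite bounds; in particular each $P_j \in \BPol_j(X)$, since $P_j$ is a norm-convergent sum of bounded $j$-homogeneous polynomials (the $\BPol_j$ norm is just a rescaled sup over the $r$-ball). Hence $f$ satisfies Definition~\ref{def:holomorphic} at every $a \in U$, so $f \in \Hol(U)$.
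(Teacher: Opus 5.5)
Your proof is correct. You use the same re-centering as the paper (your $P_j$ is the paper's $\tilde p_j=\sum_{m\ge j}p_{m,j}$), but the key estimate is different, and it is exactly the point you flagged.

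\textbf{The paper's estimate.} The paper bounds each mixed term by $\|q_m\|\,\|a\|^{m-k}\|x-a\|^k$, which reduces everything to convergence of $\sum_m\|q_m\|(\|a\|+r)^m$. It then appeals to Cauchy--Hadamard, $R^{-1}=\limsup_m\|p_m\|^{1/m}$, together with the polarization bound $\|q_m\|\le\frac{m^m}{m!}\|p_m\|$. This is the naive route you rejected. Since $(m!)^{1/m}/m\to 1/e$, it only yields $\limsup_m\|q_m\|^{1/m}\le e/R$. The last inequality in the paper's chain, $\bigl(\limsup_m\tfrac{(m!)^{1/m}}{m}\|q_m\|^{1/m}\bigr)^{-1}\le\bigl(\limsup_m\|q_m\|^{1/m}\bigr)^{-1}$, actually goes the other way: the left side is $e$ times the right side. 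As written, the paper's argument therefore only guarantees the re-expansion when $\|a\|+r<R/e$. In its Hilbert-space setting this can be repaired by the classical fact that a symmetric form and its polynomial have equal norms, but the paper does not invoke it.

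\textbf{What your route buys.} Your Cauchy estimate on the one-variable slice $\lambda\mapsto p_m(a+\lambda h)$ gives
$\binom{m}{j}|q_m(a^{m-j}h^j)|\le t^{-j}\|p_m\|(\|a\|+t\|h\|)^m$.
It never passes through $\|q_m\|$, so it needs no polarization constant and works in an arbitrary Banach space. It also delivers the full ball $B(0,R)$, and letting $t\downarrow1$ gives re-expansion on $B(a,r)$ for every $r<R-\|a\|$. Getting $\rho^m\|p_m\|\to0$ directly from uniform convergence on $\overline{B(0,\rho)}$, rather than from Cauchy--Hadamard, also keeps the argument self-contained.

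\textbf{Presentation.} Two points to fix. First, your initial displayed Cauchy estimate (with the undefined $M'$) and the phrase about ``$C'2^{-j}r^0$-type'' bounds are garbled. Replace them with the clean estimate above, which is what actually carries the proof. Second, uniform convergence of the rearranged series deserves one explicit line. With $\theta=(\|a\|+2r)/\rho<1$ and $\|h\|\le r$, your bounds give $|P_j(h)|\le\frac{M}{1-\theta}2^{-j}$. Hence $\bigl|f(a+h)-\sum_{j\le J}P_j(h)\bigr|\le\frac{M}{1-\theta}2^{-J}$. The same uniform bound, extended by homogeneity, shows that the series defining $P_j$ converges in the Banach space of bounded $j$-homogeneous polynomials.
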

\begin{proof}
    Let $a\in B(0,R)$. To prove the statement we have to find polynomials $\tilde{p}_m \in \BPol_m(X)$ such that $f(x) = \sum_{m=0}^{\infty} \tilde{p}_m(x-a)$ converges uniformly to $f(x)$ on some $B(a,r) \subset B(0,R)$. For $q_m \in \ML(X^m) $ and $k\leq m$, define $p_{m,k} \in \Pol_k(X)$ via
    $$
    q_{m,k}(x_1, \dots, x_k) = {m \choose k} q_m(a,\dots, a, x_1,\dots, x_k),\quad  p_{m,k}(x) = q_{m,k}(x^k).
    $$
    Heuristically, we can write 
    \begin{align}
        \nonumber \sum_{m=0}^{\infty} p_m(x) &= \sum_{m=0}^{\infty} p_m(x-a+a)\\
        \nonumber &= \sum_{m=0}^{\infty} q_m\left( (x-a+a)^{m} \right) \\
        \nonumber &= \sum_{m=0}^{\infty} \sum_{k=0}^m {m \choose k} q_m(a^{m-k} (x-a)^k)\\
        &= \sum_{m=0}^{\infty} \sum_{k=0}^m  p_{m,k}(x-a) \label{eq:poly_expansion_beforelast}   \\ 
        &=  \sum_{k=0}^{\infty} \sum_{m=k}^{\infty} p_{m,k}(x-a),   \label{eq:poly_expansion_last}     
    \end{align}
    suggesting that we should take $\tilde{p}_k(x):= \sum_{m=k}^{\infty} p_{m,k}(x)$ so that $$f(x)=\sum_{m=0}^{\infty} \tilde{p}_m(x-a).$$
    Of course, we need to prove that $\tilde{p}_m \in \BPol_m(X)$ and that the above sum is uniformly convergent. The latter, also implies that we can reorder the summation to get (\ref{eq:poly_expansion_last}) from (\ref{eq:poly_expansion_beforelast}). Observe that if $\|x-a\| < r$ then 
    \begin{align}
        \nonumber \left\lVert\sum_{m=0}^\infty\sum_{k=0}^\infty {m \choose k} q_m(a^{m-k} (x-a)^k) \right\rVert &  \leq \sum_{m=0}^{\infty} \sum_{k=0}^m {m \choose k} \BMLnorm{q_m} \|a\|^{m-k} \|x-a\|^k \\
         \label{eq:expand_norm}& \leq  \sum_{m=0}^{\infty}  \BMLnorm{q_m}\; \left(\|a\| +r \right)^m.
    \end{align}
    
    This implies that if the sum on the right-hand side of~\eqref{eq:expand_norm} is convergent then we can interchange the summation over $m$ and $k$ in \eqref{eq:poly_expansion_beforelast} to arrive at~\eqref{eq:poly_expansion_last}. 
To verify that the sum in~\eqref{eq:expand_norm} is convergent, let $R$ be the radius of convergence of $\sum_{m=0}^{\infty} p_m(x)$. Then, by the Cauchy-Hadamard formula \cite[Theorem~4.4]{mujica2010} we have
$$R^{-1} = \limsup_{m\rightarrow \infty} \Polnorm{p_m}^{1/m} < \infty.$$
Now if $B(a,r) \subset B(0,R)$ we have 
\begin{align*}
\|a\|+r < R &= \left( \limsup_{m\rightarrow \infty} \Polnorm{p_m}^{1/m} \right)^{-1}\\
& \leq \left( \limsup_{m\rightarrow \infty} \frac{(m!)^{1/m}}{m} \BMLnorm{q_m}^{1/m} \right)^{-1}\\
& \leq  \left( \limsup_{m\rightarrow \infty}  \BMLnorm{q_m}^{1/m} \right)^{-1}, 
\end{align*}
where in the last inequality we used $\lim_{m\rightarrow \infty} \frac{(m!)^{1/m}}{m} = \frac{1}{e}$. This implies that the right hand side of (\ref{eq:expand_norm}) is convergent. From (\ref{eq:expand_norm}), we also have that 
$$ \sum_{m=k}^{\infty} {m \choose k} \BMLnorm{q_m} \|a\|^{m-k} r^k < \infty,
$$
hence $\tilde{p}_k \in \BPol_k(X)$, which finishes the proof.
\end{proof}

\subsection{Segal--Bargmann space}
The subset of holomorphic functions of the form $f(\xi)= \sum_{I \in \mathcal I} a_I z^I(\xi)$ in $L^2(\mathbb{H}, \mu)$ is denoted by $\mathcal{H}L^2(\mathbb{H}, \mu)$. This is the infinite-dimensional Segal--Bargmann space associated with the measure $\mu$. Note that for every such function $f$ we have $\sum_{I \in \mathcal I} |a_I|^2 \| z^I\|^2_{L^2(\mathbb{H}, \mu)} < \infty $. For a treatment of finite- and infinite- dimensional Segal--Bargmann spaces and their relevance in mathematical physics see~\cite{MR1770752}.

\subsection{Fock space}\label{subsec:fock}
Let $\ppHil$ be a Hilbert space, and let $\ppHil^{\otimes n}$ be the n-fold tensor product of $\ppHil$. The subspace of symmetric tensors in $\ppHil^{\otimes n}$ is denoted by $\ppHil^{\vee n}$. We define the symmetric Fock space $F(\ppHil)$ associated with $\ppHil$ as the closure of the symmetric tensor algebra $ T^\vee(H) := \bigoplus_{n=0}^{\infty} \ppHil^{\vee n}$ with respect to the norm induced by the inner product $\langle \cdot , \cdot \rangle_{F(\ppHil)} = \sum_{n=0}^{\infty} n! \, \langle \cdot , \cdot \rangle_{\ppHil^{\vee n}}$ .

\begin{rk}
    The symmetric Fock space is typically constructed (for instance, see \cite{GiannakisEtAl25}) by taking the closure of $T^\vee(\mathbb H)$ with respect to an inner product, $\sum_{n=0}^{\infty} \langle \cdot , \cdot \rangle_{\ppHil^{\vee n}}$, that does not include $n!$ weights. For our purposes, the presence of $n!$ will be crucial for establishing a correspondence between a space of functions on $\mathbb H$ and a symmetric Fock space that preserves both multiplication and inner product structures simultaneously (multiplication on the Fock space is given by $\vee$, discussed below).
\end{rk}

\noindent Let $\eta_1, \dots, \eta_n$ be vectors in $\ppHil$, and define
\begin{align}\label{eq:def_wedge}
 \eta_1 \vee \dots \vee \eta_n = P_{+}(\eta_1 \otimes \dots \otimes \eta_n) := \frac{1}{n!} \sum_{\sigma \in S_n} \eta_{\sigma(1)} \otimes \dots \otimes \eta_{\sigma(n)},
\end{align}
where $P_+$ is the orthogonal projection from $\ppHil^{\otimes n}$ onto $\ppHil^{\vee n}$ and the sum is taken over all permutations in $S_n$. The symbol $\vee$ defines a multiplication structure on the Fock space. Let $I=(i_1, i_2, \dots)$ be a multi-index in $\mathcal{I}$. For vectors  $ \xi_1, \xi_2, \dots  $ in $\ppHil$, we use the shorthand notation $\xi^{\vee I}$ to denote  $\xi_1^{i_1} \vee \xi_2^{i_2} \vee \dots$ where $\xi_1^{i_1}$ represents $\xi_1 \vee \dots \vee \xi_1$ (repeated $i_1$ times).

\begin{lem}\label{lem:fock_mult}
    Given $\{ \xi_1, \xi_2, \dots \} \subset H$, multi-indices $I=(i_1, i_2, \dots )$ and $J=(j_1, j_2, \dots )$ in $\mathcal I$, we have $\xi^{\vee I} \vee \xi^{\vee I} = \xi^{\vee (I+ J) }$, i.e.,  $$\left( \xi_1^{i_1} \vee \xi_2^{i_2} \vee \dots \right) \vee \left( \xi_1^{j_1} \vee \xi_2^{j_2} \vee \dots \right) = \xi_1^{i_1 + j_1} \vee \xi_2^{i_2+j_2} \vee \dots. $$
    
\end{lem}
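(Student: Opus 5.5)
The plan is to prove the identity first for elementary tensors and then pass to the stated form by bookkeeping on multi-indices. (The statement has a typo: the second factor should be $\xi^{\vee J}$.) The key fact is that $\vee$ is associative and commutative on $T^\vee(H)$. Here I take the product of symmetric tensors of degrees $n$ and $m$ to be $P_+(\,\cdot \otimes \cdot\,)$, as suggested by \eqref{eq:def_wedge}.

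The central claim I will prove is
\[
(\eta_1\vee\dots\vee\eta_n)\vee(\eta_{n+1}\vee\dots\vee\eta_{n+m}) = \eta_1\vee\dots\vee\eta_{n+m}
\]
for arbitrary $\eta_i\in H$. Since $P_+$ is linear and $\otimes$ is bilinear, this reduces to showing
\[
P_+\bigl(P_+(\eta_1\otimes\dots\otimes\eta_n)\otimes P_+(\eta_{n+1}\otimes\dots\otimes\eta_{n+m})\bigr) = P_+(\eta_1\otimes\dots\otimes\eta_{n+m}).
\]
I will embed $S_n\times S_m$ into $S_{n+m}$ as permutations preserving the blocks $\{1,\dots,n\}$ and $\{n+1,\dots,n+m\}$. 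For such $(\sigma,\tau)$ one has
\[
P_+(\eta_{\sigma(1)}\otimes\dots\otimes\eta_{\sigma(n)}\otimes\eta_{n+\tau(1)}\otimes\dots\otimes\eta_{n+\tau(m)}) = P_+(\eta_1\otimes\dots\otimes\eta_{n+m}),
\]
because $P_+ \circ U_\pi = P_+$ for every permutation operator $U_\pi$. This follows from the definition: the sum over $S_{n+m}$ is reindexed by $\rho\mapsto\rho\pi$. Expanding the inner projections gives an average over $(\sigma,\tau)$ of $n!\,m!$ equal terms, divided by $n!\,m!$, which yields the claim.

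By iterating the central claim, the product of elementary symmetric tensors is just the concatenation of their factors. Commutativity of $\vee$ also follows, since $P_+$ is invariant under reordering. Writing $\xi^{\vee I}$ as the elementary product with $\xi_1$ repeated $i_1$ times, $\xi_2$ repeated $i_2$ times, and so on, the product $\xi^{\vee I}\vee\xi^{\vee J}$ is the symmetrization of the concatenated list. That list contains $\xi_r$ exactly $i_r+j_r$ times. Reordering the list so that equal vectors are grouped does not change $P_+$ of it, and the grouped list is $\xi^{\vee(I+J)}$. All multi-indices lie in $\mathcal I$, so only finitely many factors occur.

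The only real obstacle is the permutation-averaging identity $P_+ \circ (P_+\otimes P_+) = P_+$. I will handle it by the reindexing argument above, checking it on elementary tensors and extending by linearity.
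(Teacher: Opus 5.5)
Your proposal is correct and follows essentially the same route as the paper: both expand the inner projections in $P_+\bigl(P_+(\cdot)\otimes P_+(\cdot)\bigr)$ as averages over $S_n\times S_m$, use the invariance of $P_+$ under permuting tensor factors to collapse the $n!\,m!$ identical terms, and then read off $\xi^{\vee(I+J)}$. You also make explicit two things the paper leaves implicit: the final regrouping of the concatenated list into blocks of equal vectors, and the consistency of the iterated $\vee$ with the $n$-ary definition in \eqref{eq:def_wedge}.
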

\begin{proof}
    For $n=|I|$ and $m=|J|$, we define subscripts $k_1, \dots, k_n$ and $\ell_1, \dots, \ell_m$ such that
    \begin{align*}
    \xi^{\vee I} &= P_{+}(\xi_{k_1}\otimes \dots \otimes \xi_{k_n}),\\
    \xi^{\vee J} &= P_{+}(\xi_{\ell_1}\otimes \dots \otimes \xi_{\ell_n}).
    \end{align*}
    In other words, $i_s = \#\{r: k_r = s \} $. Then,
    \begin{align*}
         \xi^{\vee I} \vee \xi^{\vee J} &= P_+\left( P_+(\xi_{k_1}\otimes \dots \otimes \xi_{k_n}) \otimes P_+(\xi_{\ell_1}\otimes \dots \otimes \xi_{\ell_n}) \right) \\
         &= P_+\left(\left(\frac{1}{n!} \sum_{\sigma \in S_n} \xi_{k_{\sigma(1)}}\otimes \dots \otimes \xi_{k_{\sigma(n)}}\right)\otimes \left(\frac{1}{m!} \sum_{\sigma' \in S_m} \xi_{\ell_{\sigma'(1)}}\otimes \dots \otimes \xi_{\ell_{\sigma'(m)}}\right)\right) \\
         &=\frac{1}{n! m!} \sum_{\sigma \in S_n}  \sum_{\sigma' \in S_m}P_+\left(\xi_{k_{\sigma(1)}}\otimes \dots \otimes \xi_{k_{\sigma(n)}}\otimes  \xi_{\ell_{\sigma'(1)}}\otimes \dots \otimes \xi_{\ell_{\sigma'(m)}}\right)\\
         &=\frac{1}{n! m!} \sum_{\sigma \in S_n}  \sum_{\sigma' \in S_m}P_+\left(\xi_{k_1}\otimes \dots \otimes \xi_{k_n}\otimes  \xi_{\ell_1}\otimes \dots \otimes \xi_{\ell_m} \right)\\
         &=P_+\left(\xi_{k_1}\otimes \dots \otimes \xi_{k_n}\otimes  \xi_{\ell_1}\otimes \dots \otimes \xi_{\ell_m} \right)\\
         &= \xi^{\vee (I+J)}. \qedhere
    \end{align*}

\end{proof}

\begin{lem}\label{lem:tensor_norm}
    Let $\{ \xi_1, \xi_2, \dots \}$ be an orthogonal basis of $H$. Then, $\| \xi^{\vee I}\|_{F(\ppHil)} = \sqrt{I!}\, \| \xi_1 \|^{i_1} \| \xi_2 \|^{i_2} \dots $. 
\end{lem}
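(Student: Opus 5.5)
The plan is to compute the norm directly from the definitions. Set $n = |I|$. Then $\xi^{\vee I}$ lies in the single summand $\ppHil^{\vee n}$, so
\[
\|\xi^{\vee I}\|^2_{F(\ppHil)} = n!\, \|\xi^{\vee I}\|^2_{\ppHil^{\otimes n}}.
\]
It therefore suffices to show that
\[
\|P_+(\xi_{k_1}\otimes\cdots\otimes\xi_{k_n})\|^2 = \frac{I!}{n!}\prod_s \|\xi_s\|^{2 i_s},
\]
where, as in the proof of Lemma~\ref{lem:fock_mult}, the indices $k_1,\dots,k_n$ are chosen so that each index $s$ occurs exactly $i_s$ times.

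First I expand the squared norm using \eqref{eq:def_wedge}. Since $P_+$ is an orthogonal projection, we have $\|P_+ v\|^2 = \langle v, P_+ v\rangle$. Applying this with $v = \xi_{k_1}\otimes\cdots\otimes\xi_{k_n}$ gives
\[
\|P_+ v\|^2 = \frac{1}{n!}\sum_{\sigma\in S_n} \prod_{r=1}^n \langle \xi_{k_r}, \xi_{k_{\sigma(r)}}\rangle .
\]
By orthogonality of the $\xi_s$, the product vanishes unless $k_{\sigma(r)} = k_r$ for every $r$. When it does not vanish, it equals $\prod_r \|\xi_{k_r}\|^2 = \prod_s \|\xi_s\|^{2i_s}$.

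The only combinatorial step is to count the permutations $\sigma$ that preserve the labeling $r\mapsto k_r$. Such a $\sigma$ must permute, within itself, each block of positions carrying the same label $s$. Each block has size $i_s$, so there are exactly $\prod_s i_s! = I!$ such permutations. Hence
\[
\|P_+ v\|^2 = \frac{I!}{n!}\prod_s\|\xi_s\|^{2i_s},
\]
and multiplying by the Fock weight $n!$ gives $\|\xi^{\vee I}\|^2_{F(\ppHil)} = I!\prod_s\|\xi_s\|^{2i_s}$. Taking square roots proves the lemma.

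I expect no real obstacle; the main care needed is the stabilizer count. One should also note that the basis is only assumed orthogonal, not orthonormal, so the norms $\|\xi_s\|$ survive in the diagonal terms. Finally, since $I\in\mathcal I$, the product $\prod_s$ is finite, so there are no convergence issues.
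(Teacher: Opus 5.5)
Your proof is correct and follows essentially the same route as the paper: you expand the norm of the symmetrized tensor, keep only the index-matching terms by orthogonality, and count them. The only cosmetic difference is that you first use $\|P_+v\|^2=\langle v,P_+v\rangle$ to collapse the double sum over $(\sigma,\sigma')$ into a single sum, so you count the $I!$ permutations fixing the labeling, whereas the paper counts $n!\,I!$ matching pairs directly.
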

\begin{proof}
    Let $n=|I|$. By definition,
    $$
    \| \xi^{\vee I}\|^2_{F(\ppHil)} = \| \xi_{k_1} \vee \dots \vee \xi_{k_n}\|^2_{F(\ppHil)} = n! \left\| \frac{1}{n!} \sum_{\sigma \in S_n} \xi_{k_{\sigma(1)}} \otimes \dots \otimes \xi_{k_{\sigma(n)}} \right\|^2,
    $$
    and the right-hand side can be written as
    $$
   \frac{1}{n!} \sum_{\sigma \in S_n} \left<  \xi_{k_{\sigma(1)}} \otimes \dots \otimes \xi_{k_{\sigma(n)}}, \sum_{\sigma' \in S_n} \xi_{k_{\sigma'(1)}} \otimes \dots \otimes \xi_{k_{\sigma'(n)}} \right>.
    $$
    Expanding this sum, we get a contribution of $\| \xi_1 \|^{2i_1} \| \xi_2 \|^{2i_2} \dots $ each time we have an exact match of indices. The list of subscripts $k_1, k_2, \dots$ contains $i_1$ copies of $1$, $i_2$ copies of $2$, etc. As a result, each of $n!$ index permutations gives rise to $I!= i_1!\, i_2! \, \dots$ terms with the total of $n! I!$ contributions of $\| \xi_1 \|^{2i_1} \| \xi_2 \|^{2i_2} \dots $. The claim of the lemma follows by applying the multiplicative factor $\frac{1}{n!}$ and taking the square root. 
\end{proof}

 We define the weighted Fock space $F_w(H) $ as the closure of the symmetric tensor algebra $ T^\vee(H)$ with respect to the norm induced by the inner product $\langle \cdot , \cdot \rangle_{F_w(\ppHil)} = \sum_{n=0}^{\infty} w(n)^2 n! \, \langle \cdot , \cdot \rangle_{\ppHil^{\vee n}}$. More specifically, in terms of the fixed orthonormal basis of Lemma \ref{lem:tensor_norm}, the inner product of $F_w(H)$ is defined by extending
$$
\langle \xi^{\vee I}, \xi^{\vee J} \rangle_{F_w(H)} := \delta_{IJ} w(|I|)^2 I!.
$$
As before, vectors $\xi^{\vee I}, \, I \in \mathcal{I}$, form an orthogonal basis of $F_w(H)$ with $
\| \xi^{\vee I} \|_{F_w(H)} = w(I)\; \| \xi^{\vee I} \|_{F(H)}$.

\subsection{Multiplicative correspondence to Fock space}

\catcode`\@=11
\newdimen\cdsep
\cdsep=3em

\def\cdstrut{\vrule height .6\cdsep width 0pt depth .4\cdsep}
\def\@cdstrut{{\advance\cdsep by 2em\cdstrut}}

\def\arrow#1#2{
  \ifx d#1
    \llap{$\scriptstyle#2$}\left\downarrow\cdstrut\right.\@cdstrut\fi
  \ifx u#1
    \llap{$\scriptstyle#2$}\left\uparrow\cdstrut\right.\@cdstrut\fi
  \ifx r#1
    \mathop{\hbox to \cdsep{\rightarrowfill}}\limits^{#2}\fi
  \ifx l#1
    \mathop{\hbox to \cdsep{\leftarrowfill}}\limits^{#2}\fi
}
\catcode`\@=12
\cdsep=3em
The Segal--It\^o type isomorphisms provide an identification of a Fock space with an appropriate space of square integrable functions with respect to a Gaussian measure. In the holomorphic case this isomorphism takes a simpler form. The map  $\theta\colon F(\mathbb{H}^*_{\mu}) \rightarrow L^2(\mathbb{H}, \mu)$ defined by extending $\theta (\xi^{\vee I}) = z^I$ is not just an isometry, but, in fact, it identifies the multiplication structure of $F(\mathbb{H}_{\mu}^*)$ and $L^2(\mathbb{H}, \mu)$. The choice of $n!$ in the weighting of the Fock space, and the $1/(n!)$ factor in the definition of $\vee$, (see (\ref{eq:def_wedge})), are both essential to preserve norm and multiplicative structures simultaneously.

The following proposition does not appear to have been stated and proved in this form, although it is most likely known implicitly in the literature (see \cite{Segal1963}, and \cite[Proposition~2.1]{Trieu2017} for a similar observation).

\begin{prop}\label{prop:FockL2}
    Let $m$ be the point-wise multiplication of functions in $\mathcal{H}L^2(\mathbb{H}, \mu)$, and let $\vee$ be the tensor product multiplication in the Fock space $F(\mathbb{H}^*_{\mu})$. Let $\mathcal{D}_{m}$ be the subspace polynomials in $\mathcal{H}L^2(\mathbb{H}, \mu)$, and let  $\mathcal{D}_{\vee}=T(\mathbb{H}^*_{\mu})$ be the tensor algebra in $F(\mathbb{H}^*_{\mu})$. Then, $\theta (\xi^{\vee I}) = z^I$ extends to an isometric isomorphism $\theta\colon F(\mathbb{H}^*_{\mu}) \rightarrow \mathcal{H}L^2(\mathbb{H}, \mu)$ 
    such that the following diagram commutes,

$$
\begin{matrix}

  \mathcal{D}_{\vee} \subset F(\mathbb{H}^*_{\mu}) \times F(\mathbb{H}^*_{\mu})                    & \arrow{r}{\vee}   & F(\mathbb{H}^*_{\mu})                    \cr
  \arrow{d}{\theta \times \theta} &                      & \arrow{d}{\theta} \cr
  \mathcal{D}_m \subset \mathcal{H}L^2(\mathbb{H}, \mu)\times \mathcal{H}L^2(\mathbb{H}, \mu)                  & \arrow{r}{m} & \mathcal{H}L^2(\mathbb{H}, \mu)                  \cr
\end{matrix}
$$
\end{prop}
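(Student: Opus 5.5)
The plan is to check everything on the distinguished orthogonal bases and then extend by density. In $F(\mathbb{H}^*_\mu)$ we take the vectors $\xi^{\vee I}$, $I\in\mathcal I$, where the $\xi_j$ are regarded as elements of $\mathbb{H}^*_\mu$ through the identification $\mathbb{H}_\mu\subset\mathbb{H}\subset\mathbb{H}^*_\mu$. In $\mathcal{H}L^2(\mathbb{H},\mu)$ we take the monomials $z^I$. There are three things to prove: (i) $\theta$ preserves norms on the basis, (ii) the family $\{z^I\}$ is orthogonal and complete in $\mathcal{H}L^2(\mathbb{H},\mu)$, and (iii) the square commutes on $\mathcal D_\vee$.

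For (i), we first note that $\{\xi_j\}$ is an orthogonal basis of $\mathbb{H}^*_\mu$. Its norms must be fixed so that they equal the standard deviations $k_j$; one normalizes $\|\xi_j\|_{\mathbb{H}^*_\mu}=k_j$, which is what the dual pairing with $\|\cdot\|_{\mathbb{H}_\mu}=\|\mathbb{K}^{-1/2}\cdot\|$ gives. Lemma~\ref{lem:tensor_norm} then yields $\|\xi^{\vee I}\|_{F(\mathbb{H}^*_\mu)}=\sqrt{I!}\,k^I$. Proposition~\ref{prop:gaussian_covariance} gives $\|z^I\|_{L^2(\mathbb{H},\mu)}=\sqrt{I!}\,k^I$, together with orthogonality for $I\neq I'$. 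Since both families are orthogonal with matching norms, $\theta$ extends by linearity and continuity to an isometry from $\overline{\operatorname{span}}\{\xi^{\vee I}\}$, which is all of $F(\mathbb{H}^*_\mu)$, into $L^2(\mathbb{H},\mu)$.

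For (ii), surjectivity onto $\mathcal{H}L^2(\mathbb{H},\mu)$ comes from the definition of that space. Each element is $f=\sum_I a_I z^I$ with $\sum_I|a_I|^2\|z^I\|^2<\infty$, so $f=\theta\bigl(\sum_I a_I\xi^{\vee I}\bigr)$, and the preimage converges in $F(\mathbb{H}^*_\mu)$ because the norms match. Conversely, every image $\theta(v)$ is an $L^2$-convergent series of monomials. Hence the range is exactly $\mathcal{H}L^2(\mathbb{H},\mu)$, and $\theta$ is an isometric isomorphism.

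For (iii), bilinearity reduces the claim to basis elements of $\mathcal D_\vee$. Lemma~\ref{lem:fock_mult} gives $\xi^{\vee I}\vee\xi^{\vee J}=\xi^{\vee(I+J)}$, so $\theta(\xi^{\vee I}\vee\xi^{\vee J})=z^{I+J}$. Pointwise multiplication gives $z^Iz^J=z^{I+J}$ directly from the definition of the monomials, so the diagram commutes on $\mathcal D_\vee$. Note that $\mathcal D_m=\theta(\mathcal D_\vee)$ is exactly the space of polynomials.

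I expect the main obstacle to be step (i), specifically pinning down the norm of $\xi_j$ in $\mathbb{H}^*_\mu$ so that it equals $k_j$. The text states $\|\xi\|^2_{\mathbb{H}^*_\mu}=k_j$, which conflicts with the required $\|\xi_j\|^2=k_j^2$. I would therefore compute this directly. The norm is $\sup|\langle\xi_j,\eta\rangle|$ over $\eta$ with $\|\mathbb{K}^{-1/2}\eta\|\le1$. This supremum is attained at $\eta=k_j\xi_j$ and equals $k_j$, so the squared norm is $k_j^2$, exactly as needed to match Proposition~\ref{prop:gaussian_covariance}. A second, smaller point is that the orthogonality of the $\xi^{\vee I}$ in Lemma~\ref{lem:tensor_norm} requires only that the $\xi_j$ be orthogonal, not orthonormal, and that holds here.
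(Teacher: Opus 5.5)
Your proposal is correct and follows essentially the paper's own argument: you match the orthogonal families $\xi^{\vee I}$ and $z^I$ using Lemma~\ref{lem:tensor_norm} and Proposition~\ref{prop:gaussian_covariance}, and you get commutativity of the diagram from Lemma~\ref{lem:fock_mult} together with $z^Iz^J=z^{I+J}$. Your direct check that $\|\xi_j\|_{\mathbb{H}^*_\mu}=k_j$ (rather than the squared version stated in Section~2.2) is exactly the normalization the paper's proof uses, and your explicit surjectivity argument spells out a step the paper leaves implicit in the phrase ``maps an orthogonal basis to an orthogonal basis.''
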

\begin{proof}
    Let $\xi_1, \xi_2, \dots \in \mathbb{H}$ be an orthonormal eigenbasis of $\ppfxCov$. Then in terms of the Cameron--Martin Hilbert space $\mathbb{H}_\mu$ we have $\| \xi_j \|_{\mathbb{H}_{\mu}} = 1/k_j$ which implies that $\| \xi_j \|_{\mathbb{H}_{\mu}^*} = k_j$. Note that from Lemma~\ref{lem:tensor_norm} we have  $\| \xi ^{\vee I} \|_{F(\mathbb{H}_{\mu})} = \sqrt{I!} \| \xi_1 \|^{i_1} \| \xi_2 \|^{i_2} \dots$, but $$\| \xi ^{\vee I} \|_{F(\mathbb{H}_{\mu}^*)} = \sqrt{I!} \| \xi_1 \|^{i_1}_{\mathbb{H}_{\mu}^*} \| \xi_2 \|^{i_2}_{\mathbb{H}_{\mu}^*} \dots = \sqrt{I!} k_1^{i_1} k_2^{i_2} \dots$$
    Using Proposition~\ref{prop:gaussian_covariance}, this implies that the map from $F(\mathbb{H}_{\mu}^*)$ to $\mathcal{H}L^2(\mathbb{H}, \mu)$ defined by $\theta(\xi^{\vee I}) = z^I$ is an isometry (since it maps an orthogonal basis to an orthogonal basis and preserves the norms of the basis elements).  From Lemma~\ref{lem:fock_mult} it follows that the multiplication structure given by $\xi^{\vee I} \vee \xi^{\vee J} = \xi^{\vee (I+J)}$ is consistent with the point-wise multiplication, $z^I z^J = z^{I+J}$, in $\mathcal{H}L^2(\mathbb{H}, \mu)$. Fix $(\sum_{I \in \mathcal I} a_I \xi^{\vee I}, \sum_{I \in \mathcal I} b_J \xi^{\vee J})$ in $\mathcal{D}_{\vee}$.  Then  $\theta(\sum_{I \in \mathcal I} a_I \xi^{\vee I}) = \sum_{I \in \mathcal I} a_I z^I \in \mathcal{H}L^2(\mathbb{H}, \mu)$ and since
$$
m\left(\theta\left(\sum_{I \in \mathcal I} a_I \xi^{\vee I}\right),  \theta \left(\sum_{I \in \mathcal I} b_J \xi^{\vee J}\right)\right) = \theta \left(\sum_{I \in \mathcal I} a_I \xi^{\vee I} \vee \sum_{I \in \mathcal I} b_J \xi^{\vee J}\right)
$$
we conclude that the above diagram commutes. 
\end{proof}

\section{Proof of main result}
\label{sec:proof}

 In this section, we provide a proof of Theorem ~\ref{thm:main}. Throughout this section, let $\ppWFunc \in \ppCone^W$ and  $w:\mathbb{N} \rightarrow \mathbb{ R}_+$ with $\lambda_n^{-1} = n! w(n)^2 $. In addition, we use the abbreviated notation $\mathcal K \equiv \mathcal K_\Lambda$ for the reproducing kernel of $\mathcal A_\Lambda$.
 We organize the proof of Theorem~\ref{thm:main} in three parts:
\begin{enumerate}[(i), wide]
\item Let $\mathcal{Q}_w$ be the Hilbert space completion of polynomials on $\bar{\ppBall} \subset \ppfxHil_{\mu}$ with respect to the inner product $\langle z^I, z^J \rangle_w = \delta_{IJ} w(|I|)^2 I! k^{2I}$. We show that $\mathcal{Q}_w$ is an RKHS that can be characterized as 
  $$
 \mathcal{Q}_w = \left\{f: \ppBall \rightarrow \mathbb{C}: f(\xi) = \sum_{I \in \mathcal I} a_I z^I(\xi), \;  \sum_{I \in \mathcal I} |a_I|^2 w(|I|)^2 I! k^{2I} < \infty \right\}.
 $$
Moreover, the restrictions of elements of $\mathcal Q_w$ on $\mathbb B$ are holomorphic functions in the sense of Definition~\ref{def:holomorphic}. 
\item Let $\ppAlg_{\ppWFunc}$ be the RKHS on $\bar \ppBall$ associated with the kernel function $\ppKer(\xi,\eta) = \ppWFunc (\langle \xi, \eta \rangle_{\ppfxHil_{\mu}})$. 
We show that $\ppAlg_{\ppWFunc}$ coincides with $\mathcal{Q}_w$. 
\item We show that if $\ppWFunc \in \ppCone^W_{ess}$, $\ppAlg_{\ppWFunc}$ is a unital RKHA. In particular, $\ppAlg_{\ppWFunc}$ is closed under pointwise multiplication of functions over $\bar{\ppBall}$. Our proof is based on a comparison with the spaces $\mathbf F_w(H)$ studied in \cite{GiannakisEtAl25}, for the choice $H = \mathbb H_\mu^*$. 
\end{enumerate}

\subsection*{Part (i). RKHS on $\bar{\ppBall}$}
First, we show that any formal power series
$ f \in \mathcal{Q}_w $ defines a holomorphic function on $\ppBall$.
Next, we prove that pointwise evaluation is a continuous linear functional
on $\bar\ppBall$. Finally, we show that the reproducing kernel Hilbert space
generated by these evaluation functionals is indeed $\mathcal{Q}_w$.

\subsubsection*{Formal power series $f\in \mathcal{Q}_w$ is a holomorphic function on $\ppBall$} We follow Definition~\ref{formula:Acoeff}. In order to show that $f \in \mathcal Q_w$ is holomorphic, we use Proposition~\ref{prop:hol-origin} with $\ppBall = B(0,1)$ as the unit ball in $\ppfxHil_{\mu}$. In the case of our power series $f(\xi)= \sum_{I \in \mathcal I} a_I z^I(\xi)$ we have $f(\xi) = \sum_{n=0}^\infty p_n(\xi)$ with $p_n(\xi) = \sum_{I: |I|=n} a_I z^I(\xi)$.

First, we need to show that each $p_n$ is continuous, i.e. $p_n \in \BPol_n (\ppfxHil_{\mu})$. Note that $p_n$ is a finite linear combination of monomials $z^I$, and continuity of $p_n$ follows from continuity of each $z^I$. Also, recall that a polynomial $p: \ppfxHil_{\mu} \rightarrow \mathbb{C}$ is continuous if the set $\{ |p(\xi)|: \|\xi\|_{\ppfxHil_{\mu}} \leq 1 \}$ is bounded. To see that $z^I$ is continuous, we write $z^I = z_{j_1} \dots  z_{j_n}$ where $n=|I|$. Then, using Cauchy--Schwarz inequality we get 
$$|z^I(\xi)| = |\langle \xi, \xi_{j_1} \rangle_{\ppfxHil} | \dots  |\langle \xi, \xi_{j_1} \rangle_{\ppfxHil} | \leq \| \xi \|_{\mathbb{H}}^n \leq \| \xi \|_{\ppfxHil_{\mu}}^n \leq 1, $$
provided $\| \xi \|_{\ppfxHil_{\mu}} \leq 1$. This shows that $p_n$ is bounded. 

Second, we need to prove that the sum $f(\xi)= \sum_{I \in \mathcal I} a_I z^I(\xi)$ is uniformly convergent on $\ppBall \subset \ppfxHil_{\mu}$. To show uniform convergence, we let $\mathcal{F}$ denote the set of all multi-indices in a tail of this sum. By the Cauchy--Schwarz inequality,
\begin{equation}
 \label{eq:tail_ineq}
    \left|\sum_{I\in \mathcal{F}} a_I z^I(\xi) \right| = \left|\sum_{I\in \mathcal{F}} (a_I \| z^I\|_w) \left(\frac{z^I(\xi)}{\| z^I\|_w}\right) \right|
    \leq \left( \sum_{I\in \mathcal{F}} |a_I|^2 \| z^I\|^2_w) \right)^{1/2} \left( \sum_{I\in \mathcal{F}}  \frac{|z^I(\xi)|^2}{\| z^I\|^2_w}\right)^{1/2}.
\end{equation}
Note that $\| f\|^2_w = \sum_{I} |a_I|^2 \| z^I\|^2_w < \infty$. Define 
\begin{equation*}
    A(r) = \sup_{\| \xi\|_{\ppfxHil_{\mu}}\leq r} \sum_{I \in \mathcal I} \frac{|z^I(\xi)|^2}{\| z^I\|^2_w},
\end{equation*}
which is independent of $\xi$. It is not hard to see that if $A(r)<\infty$ for every $0<r<1$, then the right-hand side of~\eqref{eq:tail_ineq} tends to zero as the tail becomes smaller, and the series converges uniformly. We have
\begin{equation*}
    A(r) = \sup_{\| \xi\|_{\ppfxHil_{\mu}}\leq r} \sum_{I \in \mathcal I} \frac{|z^I(\xi)|^2}{w(|I|)^2 I! k^{2I}} 
    = \sup_{\| \xi\|_{\ppfxHil_{\mu}}\leq r} \sum_{n=0}^{\infty} \frac{1}{w(n)^2 n!} \left( \| \xi \|^{2n}_{\ppfxHil_{\mu}} \right)
    = \sup_{\| \xi\|_{\ppfxHil_{\mu}}\leq r} \Lambda(\| \xi\|_{\ppfxHil_{\mu}} ),
\end{equation*}
which is finite given our choice $\Lambda \in \ppCone^{W}$. This proves that $ f\in \mathcal{Q}_w$ is holomorphic on $\ppBall$. More generally, it is not hard to see that using a similar argument the expansion $\ppKer(\xi,\eta) = \sum_{n=0}^{\infty} \lambda_n \langle \xi, \ppfxCov^{-1} \eta \rangle^n$ is uniformly convergent for $\xi,\eta \in \bar\ppBall$. \qed

\subsubsection*{For $\xi \in \bar\ppBall$, the pointwise evaluation $\chi_{\xi}: \mathcal{Q}_{w} \rightarrow \mathbb{C}$, defined by $\chi_{\xi}(f) = f(\xi)$, is a continuous linear functional.}
Define $\alpha_j = \langle \xi_j, \xi \rangle_{\ppfxHil}$ and $\alpha = (\alpha_1, \alpha_2, \dots)$ so that $\xi = \sum_j \alpha_j \xi_j$. From definition of $z_j$ we have $\chi_{\xi}(z_j) = \alpha_j $. For a general monomial $z^I$ we get $\chi_{\xi}(z^I) = \alpha^I$. Monomials $z^I, I\in \mathcal{I}$ form an orthogonal basis for $\mathcal{Q}_{w}$ and hence a function $f: \mathbb{H} \rightarrow \mathbb{C}$ belongs to $\mathcal{Q}_{w}$ if and only if
\begin{equation}\label{formula:Acoeff}
f = \sum_{I} a_I z^I \textit{ with } \sum_{I \in \mathcal I} |a_I|^2 \| z^I \|^2_{w} < \infty
\end{equation}
where $\|z^I\|^2_w = w(|I|)^2 I! k^{2I}$. The evaluation map $\chi_{\xi}$ defines a bounded linear functional on $\mathcal{Q}_{w}$ iff there exists $\ppKer_{\xi} = \sum_{I} b_I z^I \in \mathcal{Q}_{w}$ such that $\chi_{\xi}(\cdot) = \left< \ppKer_{\xi}, \cdot \right>_{w}$. In that case,
$$
\alpha^I = \chi_{\xi}(z^I) = \left< \sum_{I} b_J z^J, z^I \right>_{w} =\bar b_I \| z^I \|_{w}^2,
$$
giving
$$
b_I = \frac{\bar z^I(\xi)}{\| z^I \|_{w}^2}, \quad \ppKer_{\xi} = \sum_{I \in \mathcal I} \frac{\bar z^I(\xi)}{\| z^I \|_{w}^2} z^I.
$$
Thus, by~\eqref{formula:Acoeff}, $\ppKer_{\xi}$ lies in $\mathcal Q_w$ and $\chi_\xi$ is continuous iff
\begin{equation}
\label{eq:k_xi_bound}
\| \ppKer_{\xi} \|_{w}^2 = \sum_{I \in \mathcal I} \frac{|\bar z^I(\xi)|^2}{\| z^I \|^4_{w} } \| z^I \|^2_{w} = \sum_{I \in \mathcal I} \frac{| z^I(\xi)|^2}{\| z^I \|^2_{w} } < \infty.
\end{equation}
To verify this, observe that $\| z^I \|^2_{w} = I! \,  w(|I|)^2 k_1^{i_1} k_2^{i_2}\dots$, and hence
$$
\| \ppKer_{\xi} \|_{w}^2 = \sum_n \sum_{|I| = n} \frac{| z^I(\xi)|^2}{I! \, w(|I|)^2 k^{2I}} = \sum_n \frac{1}{n! w(n)^2} \sum_{|I|=n} \frac{n!}{I!} |z^I(\ppfxCov^{-1}\xi)|^{2}.
$$
Using the binomial theorem, the last sum gives $\| \ppfxCov^{-1/2}\xi \|^{2n}_{\mathbb{H}}$. As a result \eqref{eq:k_xi_bound} holds and $\chi_{\xi}$ is a character on $\mathcal{Q}_w$ if and only if
\begin{equation*}
     \sum_n \frac{\| \ppfxCov^{-1/2}\xi \|^{2n}_{\mathbb{H}}}{n! \, w(n)^2}< \infty.
\end{equation*}
Evaluating $\ppWFunc$ at $1\in \mathbb{C}$, we have $\ppWFunc(1) =
\sum_n \frac{1}{n! \, w(n)^2} < \infty$. Since $\| \ppfxCov^{-1/2}\xi \|^{2}_{\mathbb{H}} \leq 1$ for any $\xi \in \ppBall$ the above sum is convergent and $\ppKer_{\xi} \in \mathcal{Q}_w$ defines a continuous linear functional.

\subsubsection*{Evaluation functionals $\langle \ppKer_{\xi}, \cdot \rangle_w$ span $\mathcal{Q}_w$} Let  $\mathcal{H} \subset \mathcal{Q}_w$ be the RKHS spanned by the kernel sections $\ppKer_{\xi}$. To show that $\mathcal{H} = \mathcal{Q}_w$ we claim that $\mathcal{H}^{\perp} = \{0 \}$ inside $\mathcal{Q}_w$. Letting $f\in \mathcal{Q}_w$  with $f\perp \ppKer_{\xi},\, \forall \xi \in \ppBall$, we have $f(\xi)=0$ for all $\xi\in \ppBall$ (see \cite[Proposition~4.4]{mujica2010}). Since $f$ is holomorphic on $\ppBall$, we have $f=0$, proving that $\mathcal{H}=\mathcal{Q}_w$. \qed

\subsection*{Part (ii). $\mathcal{Q}_w \cong \ppAlg_{\ppWFunc}$}

Let $\tilde{\ppKer} (\xi,\eta) = \ppKer_{\xi}(\eta) $ and $\ppKer(\xi, \eta) = \ppWFunc (\langle  \xi,\eta\rangle_{\ppfxHil_{\mu}})$ with $\Lambda(z) = \sum_n \lambda_n z^n$. We claim that $\tilde{\ppKer}(\xi, \eta) = \ppKer(\xi, \eta)$ if $\lambda_n^{-1} = n! w(n)^2 $. Indeed,
\begin{align*}
\tilde{\ppKer}(\xi,\eta) &= \sum_{I \in \mathcal I} \frac{1}{ w(|I|)^2 I!} \frac{\overline{z}^I(\xi) z^I(\eta)}{k^{2I}} \\
&= \sum_{n=0}^{\infty} \frac{1}{ w(n)^2 n!} \sum_{|I|=n} \frac{n!}{I!} \frac{\overline{z}^I(\xi) z^I(\eta)}{k^{2I}}\\ 
&= \sum_{n=0}^{\infty} \frac{1}{ w(n)^2 n!} \langle \xi, \ppfxCov^{-1} \eta \rangle_{\ppfxHil}^n \\
&= \ppKer(\xi, \eta),
\end{align*}
so $\mathcal{Q}_w$ and $\ppAlg_{\ppWFunc}$ are RKHSs of holomorphic functions on $\ppBall$ with the same kernel $\ppKer = \tilde{\ppKer}$. It follows that $\mathcal{Q}_w$ coincides with $\ppAlg_{\ppWFunc}$. \qed

 \subsection*{Part (iii). RKHA on $\ppBall$}
We have already shown that $\ppAlg_{\ppWFunc}\cong \mathcal{Q}_w$ is an RKHS.  Now, we identify $\mathcal{Q}_w$ with the weighted Fock space $F_w(\mathbb{H}_{\mu}^*)$.
Using an argument identical to the proof of Proposition~\ref{prop:FockL2}, we first extend $\theta(\xi^{\vee I}) = z^I$ to a map from the tensor algebra of $\ppfxHil_{\mu}^*$ to polynomials in $\mathcal{Q}_w$. This gives an isometry that preserves the multiplication structure. We then take the completions relative to the inner product of $F_w(\mathbb{H}_{\mu}^*)$ and $\mathcal{Q}_w$, respectively, to get an isometric isomorphism $\theta: F_w(\mathbb{H}_{\mu}^*) \rightarrow \mathcal{Q}_w$.  For a given weight $w$ (or equivalently $\Lambda$), the map $\theta$ is an isometry and preserves the multiplication structure. In particular, if $w$ is chosen such that  $F_w(\mathbb{H}_{\mu}^*)$ is an algebra then $\mathcal{Q}_w$ is also an algebra of functions under pointwise multiplication and the following diagram commutes,
    $$
\begin{matrix}

  F_w(\mathbb{H}^*_{\mu}) \times F_w(\mathbb{H}^*_{\mu})                    & \arrow{r}{\vee}   & F_w(\mathbb{H}^*_{\mu})                    \cr
  \arrow{d}{\theta \times \theta} &                      & \arrow{d}{\theta} \cr
   \mathcal{Q}_w \times \mathcal{Q}_w        & \arrow{r}{m} & \mathcal{Q}_w            \cr
\end{matrix}
$$
This shows that if the weight are chosen so that $F_{w}(\ppfxHil_{\mu}^*)$ is an RKHA, the same is true for $\mathcal{Q}_w$ and hence $\ppAlg_{\ppWFunc}$. Using \cite[Theorem~9]{GiannakisEtAl25}, $F_w(\mathbb{H}_{\mu}^*)$ is closed under multiplication, and an RKHA, when $\lambda_n = 1/\left(n!w(n)^2\right)$ is subconvolutive. Recall that in this paper we define the weighted Fock space as $F_w(H)=\bigoplus_{n=0}^{\infty} w(n)^2 n! H^{\vee n}$, while \cite{GiannakisEtAl25} uses $\mathbf{F}_{\tilde{w}}(H)=\bigoplus_{n=0}^{\infty} \tilde{w}(n)^2 H^{\vee n}$, implying that $w(n)^2 = \tilde{w}(n)^2/n!$. But $\Lambda \in \ppCone^{W}_{ess}$ means $\Lambda^2 \leq C \Lambda$, where $\leq$ is induced from the positive cone. Translating $\Lambda^2 \leq C \Lambda$ in terms of the Fourier coefficients $\lambda_n$ this implies that $\lambda_n$ is a subconvolutive sequence. This finishes the proof. \qed

\section{Further observations}
\label{sec:further}

\subsection{Integral representation and Gelfand triple}\label{sec:integral-rep}
We discuss a characterization of the RKHAs $\ppAlg_{\Lambda}$ using kernel integral operators on $\mathcal{H}L^2 (\ppfxHil, \mu)$ induced by the kernel $\ppKer$ and the Gaussian measure $\mu$.
This construction can be viewed as an infinite-dimensional analogue of the use of singular kernels to define (fractional) Sobolev spaces. See, for example, \cite{Stein1971SingularIA}. 
Note that elements of $\mathcal{H}L^2 (\ppfxHil, \mu)$ are holomorphic on $\ppfxHil$ while the kernel $\ppKer$ is only defined on a disk contained in $\ppfxHil_{\mu} \subset \ppfxHil$.

Let $\mathcal{D}_N \subset \mathcal{H}L^2 (\ppfxHil, \mu) $ be the space of continuous polynomials of degree at most $N$, and define $\Lambda_N \colon \mathbb C \to \mathbb C$ by $\ppWFunc_N(z) = \sum_{n=0}^{N} \lambda_n z^n$ ($\ppWFunc_N$ approaching $\ppWFunc$ as $N\rightarrow \infty$). The positive-definite kernel $\ppKer_{\ppWFunc_N}(\xi,\eta) = \Lambda_N \left( \langle \xi, \ppfxCov^{-1} \eta \rangle \right) $, defined on $\mathbb H_\mu \times \mathbb H_\mu$, induces a linear operator $T_{N}\colon \mathcal{D}_N \rightarrow \mathcal{D}_N$ as    
\begin{equation*}
    T_N f(\eta) = \int_{\ppfxHil} \ppKer_{\ppWFunc_N}(\eta,\xi)\, f(\xi) \, d\mu(\xi), \quad \eta \in \ppfxHil_{\mu} \subset \ppfxHil.  
\end{equation*}
Note that $\mathcal{D}_N$ is spanned by monomials $z^I, |I|\leq N$. Moreover, we have,
\begin{equation*}
    \ppKer_{\ppWFunc_N}(\xi, \eta) = \sum_{|I|\leq N} \frac{1}{ w(|I|)^2 I!} \frac{\overline{z}^I(\xi) z^I(\eta)}{k^{2I}}.
\end{equation*}
leading to
\begin{align}
\nonumber \left( T_N z^J \right) (\eta) &= \int_{\ppfxHil}  \sum_{|I|\leq N} \frac{1}{ w(|I|)^2 I!} \frac{z^I(\eta) \overline{z}^I(\xi)}{k^{2I}}\, z^J(\xi) \, d\mu(\xi)    \\ 
\nonumber &= \sum_{|I|\leq N} \frac{1}{ w(|I|)^2 I!} \frac{z^I(\eta) }{k^{2I}} \int_{\ppfxHil}   \overline{z}^I(\xi) z^J(\xi) \, d\mu(\xi)    \\ 
\label{eq:TN_monomials} &= \frac{1}{ w(|J|)^2} z^J(\eta).
\end{align}
In other words, $T_N$ acts diagonally with respect to the monomial basis $z^I$.

Next, we equip $\mathcal{D}_N$ with the inner product $\langle \cdot, \cdot \rangle_N$ induced from the inner product of $L^2(\ppfxHil, \mu)$, making it a Hilbert space. We also define $\mathcal{D}_N'$ to be the same vector space as $\mathcal{D}_N$ but equipped with the inner product  $\langle\langle f, g \rangle\rangle_N = \langle f, T_N^{-1} g \rangle_{L^2(\ppfxHil, \mu)}$. 
Note that \eqref{eq:TN_monomials} implies that $ \|  z^J \|_{\mathcal{D}'_N} = \| T_N^{-1/2} z^J \|_{\mathcal{D}_N}$, and hence $T_N^{1/2}$ defines an isometric isomorphism from $\mathcal{D}_N$ to $\mathcal{D}'_N$. It is also not hard to see that $\ppAlg_{\Lambda}$ is the direct limit of Hilbert spaces $\mathcal{D}'_N$ as $N\rightarrow \infty$, and hence $T^{1/2}_N$ extends to an isometric isomorphism $T: \mathcal{H}L^2 (\ppfxHil, \mu) \rightarrow \ppAlg_{\Lambda}$.

If, instead, we consider $T$ as an operator mapping $ \mathcal{H}L^2 (\ppfxHil, \mu)$ to itself then it is unbounded and densely defined with domain $\bigcup_N \mathcal D_N$.   
In our case of $\lambda(n)=e^{-\tau n^p}$ (or equivalently $w(n) = \sqrt{e^{\tau n^p}/n!}$), we have  $\mathcal{H}L^2 (\ppfxHil, \mu) \subset  \ppAlg_{\Lambda}$. Furthermore, if we define 
$$
\hat \ppAlg_{\Lambda} = \left\{f: \ppBall \rightarrow \mathbb{C}:\; f(\xi) = \sum_{I \in \mathcal I} a_I z^I(\xi):\;  \sum_{I \in \mathcal I}  \frac{|a_I|^2}{w(|I|)^2} I! k^{2I} < \infty \right\}
$$
then $\hat \ppAlg_{\Lambda} \subset \mathcal{H}L^2 (\ppfxHil, \mu) \subset  \ppAlg_{\Lambda}$ forms a Gelfand triple. Such a triple is the basic ingredient in the construction of generalized Brownian
functionals; see \cite{Kuo1983BrownianFA,Hida1976ANALYSISOB,STREIT198455}.

\subsection{Twisted canonical commutation relations}
The CCR algebras generated by creation and annihilation operators plays a central role in the theory of Fock spaces \cite{MR1770752}. In the Segal--Bargmann representation of the CCR algebra for one variable the annihilation and creation operators, $a$ and $a^\dagger$ are represented by unbounded multiplication and differentiation operators on holomorphic functions, respectively,
\begin{equation*}
    (a f)(s) = f'(s), \quad (a^\dagger f)(s) = s f(s), \quad s\in \mathbb{C}.
\end{equation*}
Since $\ppAlg_{\ppWFunc}$ can be identified with a weighted Fock space, it is natural to ask whether analogous relations hold in this weighted setting. In this section, we introduce the creation and annihilation operators on $\ppAlg_{\ppWFunc}$ and show that they are bounded operators satisfying a twisted form of the CCRs.  

For $j \in \mathbb N$, we define the creation operator $a_j^{\dagger}$ via its action on monomials, 
$$a_j^{\dagger} z^I = \frac{1}{k_j} z^{I+j},$$
where $I+j = (i_1, \dots, i_{j-1}, i_{j}+1,i_{j+1}, \dots )$. More specifically, $a_j^{\dagger}$ is the multiplication operator by $\frac{z_j}{k_j}$ where $z_j: \ppfxHil \rightarrow \mathbb{C}$ is the linear function $z_j(\xi) = \langle \xi_j, \xi \rangle$. Recall that $\| z^I \|_{\ppAlg_{\ppWFunc}} = w(|I|) \sqrt{I!} k^{I}$, in particular $\| z_j \|_{\ppAlg_{\ppWFunc}} = k_j$, and hence $\frac{z_j}{k_j}$ is the normalized coordinate function.  Since $\ppAlg_{\ppWFunc}$ is an RKHA, it is evident from this definition that $a_j^{\dagger}$ is a bounded operator. The annihilation operator, $a_j$, is then defined as the adjoint of $a_j^{\dagger}$ with respect to the inner product of $\ppAlg_{\ppWFunc}$. Similarly, we define the operator $D$ by $Dz^I = w(n)^2 z^I$ where $n=|I|$. The twisted bracket is then defined by  
$$[A, B]_D = [\operatorname{Ad}_D(A), B], \quad \operatorname{Ad}_D(A) = D A D^{-1}.$$
Note that for our weights $w$ in Theorem~\ref{thm:main}, $D$ is bounded but $D^{-1}$ is unbounded. The twisted bracket $[A, B]_D$ is then well-defined for all bounded operators $A$ on $\mathcal A_\Lambda$ and all bounded operators $B$ on $ \mathcal A_\Lambda$ whose range is contained in the domain of $D^{-1}$.

Next, recall that the Gelfond--Leontiev differential operator \cite{Gelfond1951}, $\mathcal D_\phi$, associated with an entire function $\phi(s) = \sum_{m=0}^\infty \phi_m s^m$ with $\phi_m \neq 0$ is defined as the operator acting on analytic functions $f(s) = \sum_{m=0}^\infty a_m s^m$ as
\begin{equation*}
    \mathcal D_\phi(s) = \sum_{m=1}^\infty a_m \frac{\phi_{m-1}}{\phi_m} s^{m-1}.
\end{equation*}
The choice $\phi(s) = e^s$ then yields the standard derivative $\mathcal D_{e^s}f(s) = f'(s)$.
The following proposition shows that the annihilation operators $a_j$ on $\ppAlg_{\ppWFunc}$ are Gelfond--Leontiev differential operators  satisfying, together with $a_j^\dagger$, twisted CCR relations. See \cite[Section~3]{Alpay2022} and \cite{Gelfond1951} for the one-dimensional case). 

\begin{prop}
    The action of annihilation operator $a_j\colon \mathcal A_\Lambda \to \mathcal A_\Lambda$ on monomials is given by 
    \begin{equation*}
        a_j z^I = k_j\, i_j \frac{\lambda_{n-1}}{\lambda_n} z^{I-j}, \quad n=|I| \geq 1,
    \end{equation*}
    and $a_j z^0 = 0$. In particular, $a_j$ is a Gelfond--Leontiev differential operator associated with $\ppWFunc(z) = \sum_{m=0}^\infty \lambda_m z^m$. Moreover, $a_j$ and $a_j^\dagger$ satisfy the twisted CCRs
        $$
    [a_j, a_j^{\dagger}]_D = \operatorname{Id}.
    $$
\end{prop}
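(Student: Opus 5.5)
The plan is a direct computation in the orthogonal monomial basis of $\ppAlg_{\ppWFunc}$, which by Part (ii) coincides with $\mathcal Q_w$. We have $\|z^I\|^2_{\ppAlg_{\ppWFunc}} = w(|I|)^2 I!\, k^{2I}$, and with $\lambda_n^{-1} = n!\,w(n)^2$ this becomes $\|z^I\|^2 = I!\,k^{2I}/(n!\,\lambda_n)$ for $n=|I|$.

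\textbf{Step 1 (adjoint of $a_j^\dagger$).} Since $a_j^\dagger z^J = k_j^{-1} z^{J+j}$ sends basis vectors to basis vectors, $\langle a_j z^I, z^J\rangle = k_j^{-1}\langle z^I, z^{J+j}\rangle$ vanishes unless $J = I-j$. Hence $a_j z^0 = 0$ and, for $i_j\ge 1$, $a_j z^I = c_I z^{I-j}$ with
\[
c_I = \frac{1}{k_j}\,\frac{\|z^I\|^2}{\|z^{I-j}\|^2}.
\]
Inserting the norm formula, the factorials and powers of $k$ telescope: $I!/(I-j)! = i_j$, $k^{2I}/k^{2(I-j)} = k_j^2$, and $(n-1)!\lambda_{n-1}/(n!\lambda_n)$ supplies the ratio $\lambda_{n-1}/\lambda_n$. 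This gives a coefficient of the form $k_j\, i_j\, \lambda_{n-1}/\lambda_n$ up to the combinatorial normalisation coming from $(n-1)!/n!$.

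\textbf{Main obstacle: this normalisation.} This bookkeeping is where I expect the real difficulty. With the convention $\lambda_n^{-1}=n!\,w(n)^2$ taken literally, the quotient produces an extra factor $1/n$. I would therefore first fix precisely which normalisation of $\lambda_n$ the displayed formula uses, and check the coefficient against the one-variable case. There $i_j=n$, and the Gelfond--Leontiev operator for $\phi=\ppWFunc$ is $z^n\mapsto (\lambda_{n-1}/\lambda_n)z^{n-1}$ (up to the scaling $k_j$). The Gelfond--Leontiev claim is then read off coordinatewise: in the variable $z_j$, $a_j$ lowers the degree by one with multiplier given by consecutive coefficient ratios of $\ppWFunc$, weighted by the multiplicity $i_j$ in the multivariable setting.

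\textbf{Step 2 (twisted CCR).} Apply $[a_j,a_j^\dagger]_D = D a_j D^{-1}a_j^\dagger - a_j^\dagger D a_j D^{-1}$ to $z^I$ with $|I|=n$. Both terms are diagonal.
\begin{itemize}
\item The first term is $\frac{w(n)^2}{w(n+1)^2}\, k_j^{-1}\, c_{I+j}\, z^I$.
\item The second term is $\frac{w(n-1)^2}{w(n)^2}\, k_j^{-1}\, c_I\, z^I$.
\end{itemize}
Using $\lambda_{n}/\lambda_{n+1} = (n+1)\,w(n+1)^2/w(n)^2$, the $w$-ratios cancel against the $\lambda$-ratios. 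The two terms then reduce to multiples of $(i_j+1)$ and $i_j$ respectively, whose difference should be $1$, giving $[a_j,a_j^\dagger]_D=\operatorname{Id}$ on polynomials. The key check is that the $n$-dependent factors cancel exactly, and this is the same normalisation issue as in Step 1.

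\textbf{Step 3 (extension).} Finally I would extend from polynomials to the natural domain by density and boundedness. Boundedness of $a_j$ follows from that of $a_j^\dagger$, which holds because $\ppAlg_{\ppWFunc}$ is an RKHA by Theorem~\ref{thm:main}. One also checks that $a_j^\dagger$ maps polynomials into the domain of $D^{-1}$, which makes the twisted bracket well defined there.
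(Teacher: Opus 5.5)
Your route is the paper's: read $a_j$ off the orthogonal monomial basis as $c_I=k_j^{-1}\|z^I\|^2/\|z^{I-j}\|^2$, then check the twisted bracket monomial by monomial. The normalisation problem you flag is real and is not a slip on your side. The paper's own computation stops at $a_jz^I=k_j\,i_j\,\frac{w(n)^2}{w(n-1)^2}\,z^{I-j}$ and never converts to $\lambda$'s. With $\lambda_n^{-1}=n!\,w(n)^2$ this equals $k_j\,\frac{i_j}{n}\,\frac{\lambda_{n-1}}{\lambda_n}\,z^{I-j}$, so the displayed coefficient in the statement carries a spurious factor of $n$.

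What your proposal lacks is a resolution; you leave it open with ``I would first fix which normalisation\dots'' and ``whose difference should be $1$''. There is nothing to choose. The relation $\lambda_n^{-1}=n!\,w(n)^2$ is forced by Part (ii), since it is exactly what makes the kernel equal $\Lambda(\langle\xi,\mathbb K^{-1}\eta\rangle)$. And $a_j$ is \emph{defined} as the adjoint of $a_j^\dagger$, so $c_I$ is determined. Commit to the corrected coefficient and everything closes:

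\emph{Gelfond--Leontiev.} Your own one-variable test ($i_j=n$) settles this. The corrected coefficient gives $k\,\lambda_{n-1}/\lambda_n$, i.e.\ $a=k\,\mathcal D_\Lambda$. The displayed one gives $k\,n\,\lambda_{n-1}/\lambda_n$, which is the Gelfond--Leontiev operator of $\sum_m\lambda_m z^m/m!$, not of $\Lambda$. So in several variables the multiplier is $i_j/n$, not $i_j$ as in your reading. The clean statement is $a_j=k_j\,D^{-1}\partial_{z_j}D$ on polynomials, the analogue of $\mathcal D_\Lambda=D^{-1}\partial D$.

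\emph{Twisted CCR.} In Step 2, keep $c_I$ in its $w$-form. Then $\operatorname{Ad}_D(a_j)z^I=k_j\,i_j\,z^{I-j}$, i.e.\ $\operatorname{Ad}_D(a_j)=k_j\partial_{z_j}$. With $a_j^\dagger=k_j^{-1}z_j$ the bracket is $[\partial_{z_j},z_j]=\mathrm{Id}$, and your two terms come out as $i_j+1$ and $i_j$ with no $n$ left over. With the displayed coefficient they would be $(n+1)(i_j+1)$ and $n\,i_j$, whose difference is $n+i_j+1$. So the twisted CCR is consistent only with the corrected formula.

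A minor point on Step 3: the two terms of the bracket are individually unbounded ($k_j\partial_{z_j}$ is). Only their difference, the identity on polynomials, extends, and that is also all the paper verifies.
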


\begin{proof}
    First, we verify the action of $a_j$ on monomials $z^I$, $I \neq 0$, using 
\begin{align*}
    a_j z^I &= \sum_{J \in \mathcal I} \left\langle a_j z^I, \frac{z^J}{\|z^J\|_{\ppAlg_\Lambda}} \right\rangle_{\ppAlg_\Lambda}\, \frac{z^J}{\|z^J\|_{\ppAlg_\Lambda}}  
    = \sum_{J \in \mathcal I} \left\langle z^I, a_j^{\dagger} z^J \right\rangle_{\ppAlg_\Lambda}\, \frac{z^J}{\|z^J\|^2_{\ppAlg_\Lambda}} \\
    &= \sum_{J \in \mathcal I} \left\langle z^I, z^{J+j}  \right\rangle_{\ppAlg_\Lambda}\, \frac{z^J}{k_j \|z^J\|^2_{\ppAlg_\Lambda}}
    = \frac{\|z^I\|^2_{\ppAlg_\Lambda}}{\|z^{I-j}\|^2_{\ppAlg_\Lambda} k_j } z^{I-j}\\
    &= \frac{I! w(n)^2 k_j }{(I-j)! w(n-1)^2} z^{I-j} = \frac{i_j\; w(n)^2 k_j}{w(n-1)^2} z^{I-j}.
\end{align*}
    where $I-j = (i_1, \dots, i_{j-1}, i_{j}-1,i_{j+1}, \dots )$. The case $a_j z^0 = 0$ follows similarly, using the fact that the constant function $z^0$ is orthogonal to every linear function $z_j$. Next, we show that $[a_j, a_j^{\dagger}]_D z^I = z^I$ for the monomial basis $z^I, I\in \mathcal{I}$. To that end, note that
    \begin{align*}
        D a_j D^{-1}  a_j^{\dagger} z^I = D a_j \left( \frac{1}{ w(n+1)^2 k_j} z^{I+j} \right)  
        &= \frac{1}{w(n+1)^2} D  \left( \frac{(i_j +1) w(n+1)^2}{w(n)^2} z^I \right)\\
        &= (i_j + 1) z^I.
    \end{align*}    
    Similarly, we have $a_j^{\dagger} D a_j D^{-1} z^I = i_j z^I$, and hence
    \begin{equation*}
        [a_j, a_j^{\dagger}]_D z^I = (i_j + 1) z^I -  i_j z^I = z^I. \qedhere
    \end{equation*}
\end{proof}

In Fock space theory, the standard coherent states are defined as eigenvectors of annihilation operators. They and their generalizations find applications in in many areas, including quantum theory, signal processing, and wavelet theory; e.g., \cite{TwarequeEtAl95}. In the context of the RKHA $\mathcal A_\Lambda$, it follows from the fact that $a_j^\dagger$ is a multiplication operator that every kernel section $\varepsilon_\eta = \mathcal K_\Lambda(\eta, \cdot)$ is an eigenvector of $a_j$ with corresponding eigenvalue $\alpha_j(\eta) := \langle \xi_j, \eta \rangle_{\mathbb H_\mu}$. More specifically, for any $f\in \ppAlg_\Lambda$, 
\begin{equation*}
    \langle a_j \varepsilon_{\eta}, f \rangle_{\ppAlg_\Lambda} = \langle  \varepsilon_{\eta}, a_j^{\dagger} f \rangle_{\ppAlg_\Lambda} = \langle \varepsilon_{\eta}, z_j f \rangle_{\ppAlg_\Lambda} = z_j(\eta) f(\eta) = \langle \alpha_j(\eta) \varepsilon_\eta,  f \rangle_{\ppAlg_\Lambda},
\end{equation*} 
and hence 
 \begin{equation*}
    a_j \varepsilon_\eta = \alpha_j(\eta) \varepsilon_\eta.
\end{equation*}
On the basis of the above, we interpret the $\varepsilon_\eta$ as coherent states associated with the twisted CCR algebra generated by the $a_j$ and $a_j^\dagger$.

\section*{Acknowledgments}

Dimitrios Giannakis acknowledges support from the U.S.\ Department of Defense, Basic Research Office, under Vannevar Bush Faculty Fellowship grant N00014-21-1-2946 and the U.S.\ Office of Naval Research under MURI grant N00014-19-1-242.
Mohammad Javad Latifi Jebelli and Michael Montgomery were supported as postdoctoral fellows from these grants.

\begin{table}[h]
 \small
\centering
\renewcommand{\arraystretch}{1.15}
\caption{\label{tab:symbols}Main symbols used in the paper.}
\begin{tabular}{p{0.28\textwidth} p{0.66\textwidth}}
\hline
\textbf{Symbol} & \textbf{Meaning} \\
\hline

$\ppfxHil$ & Infinite-dimensional separable Hilbert space. \\

$\ppfxCov \colon \ppfxHil\to\ppfxHil$ & Positive-definite trace-class operator. \\

$\xi_j$ & Orthonormal eigenbasis of $\ppfxCov$ with $\ppfxCov\xi_j=k_j^2\xi_j$. \\

$\ppfxHil_\mu$ & Cameron--Martin Hilbert space associated with $\mu$. \\

$\mu$ & Gaussian measure on $\ppfxHil$ with covariance  $\ppfxCov$. \\

$\ppBall$ & The ball
$\ppBall=\{\xi\in\ppfxHil:\|\ppfxCov^{-1/2}\xi\|<1\}$. \\

$\Delta$ & Unit disk in $\mathbb C$: $\Delta=\{z\in\mathbb C:|z|<1\}$. \\

$W(\Delta)$ & Wiener algebra \\

$\ppCone^W$ & Positive cone in $W(\Delta)$ with coefficients $\lambda_n\ge0$. \\

$\ppCone^W_{\mathrm{ess}}$ & Essential elements of $\ppCone^W$: $a^2\le Ca$ for some $C>0$. \\ 

$\ppWFunc$ & A function in $W(\Delta)$, $\Lambda(z)=\sum_{n\ge0}\lambda_n z^n$. \\

$\ppKer_{\ppWFunc}$ & Kernel on $\bar\ppBall\times\bar\ppBall$:
$\ppKer_{\ppWFunc}(\xi,\eta)=\ppWFunc(\langle \xi,\ppfxCov^{-1}\eta\rangle)$. \\

$\mathcal A_\Lambda$ & RKHS on $\bar\ppBall$ induced by $\ppKer_{\ppWFunc}$ .\\

$\Delta\colon \ppAlg\to \ppAlg\otimes\ppAlg$ & Comultiplication operator for an RKHA: $\Delta^*(f\otimes g)=fg$. \\

$\mathcal I$ & Nonnegative multi-indices $I=(i_1,i_2,\dots)$ with $\sum_j i_j < \infty$. \\

$|I|$, $I!$ & Length and factorial: $|I|=\sum_r i_r$, $I!=\prod_r i_r!$. \\

$z_j(\xi)$ & Coordinate function on $\ppfxHil$: $z_j(\xi)=\langle \xi,\xi_j\rangle$. \\

$z^I(\xi)$ & Monomial: $z^I(\xi)=\prod_{j} z_j(\xi)^{i_j}$. \\

$L^2(\ppfxHil,\mu)$ & Square-integrable functions w.r.t.\ $\mu$. \\

$\mathcal{H}L^2(\ppfxHil,\mu)$ & Segal--Bargmann space (holomorphic part of $L^2(\ppfxHil,\mu)$). \\

$\ppHil^{\otimes n}$, $\ppHil^{\vee n}$ & $n$-fold tensor product and its symmetric subspace. \\

$F(\ppHil)$ & Symmetric Fock space.\\

$\vee$ & Symmetric tensor product. \\

$\xi^{\vee I}$ & Symmetric tensor monomial corresponding to multi-index $I$:
$\xi^{\vee I}=\xi_1^{\vee i_1}\vee\xi_2^{\vee i_2}\vee\cdots$. \\

$\theta$ & Segal--It\^o type map $F(\ppfxHil_\mu^*)\to \mathcal{H}L^2(\ppfxHil,\mu)$. \\

$w\colon\mathbb N\to\mathbb R_{>0}$ & Weight related to $\Lambda$  via $\lambda_n^{-1}=n!\,w(n)^2$. \\

$\mathcal Q_w$ & Completion of polynomials using weighted inner product. \\

$a_j^\dagger$, $a_j$ & Creation and annihilation operators. \\

$D$ & Diagonal operator $Dz^I=w(|I|)^2 z^I$. \\

$[A,B]_D$ & Twisted commutator $[A,B]_D=[DAD^{-1},B]$. \\

\hline
\end{tabular}
\end{table}

\bibliography{bibliography,bibliography_dg}

@article{Alpay2022,
  author        = {Alpay, D. and Cerejeiras, P. and K\"{a}hler, U. and Kling, T.},
  title         = {{C}ommutators on {F}ock spaces},
  journal       = {J. Math. Phys.},
  volume        = {64},
  number        = {4},
  pages         = {042102},
  year          = {2023},
  issn          = {0022-2488},
  doi           = {10.1063/5.0080723},
}

@article{Gelfond1951,
  author        = {Gelfond, A. O. and Leontev, A. F.},
  title         = {On a generalization of {F}ourier series},
  journal       = {Mat. Sbornik N.S.},
  fjournal      = {Mat. Sbornik N.S.},
  volume        = {29/71},
  year          = {1951},
  pages         = {477--500},
  mrclass       = {30.0x},
  mrnumber      = {45812},
  mrreviewer    = {I.\ M.\ Sheffer},
}

@book{Segal1963,
  title         = {Mathematical problems of relativistic physics},
  author        = {Segal, I. E.},
  volume        = {2},
  year          = {1963},
  publisher     = {American Mathematical Society},
  address       = {Providence, RI},
}

@article{Trieu2017,
  issn          = {03794024, 18417744},
  url           = {https://www.jstor.org/stable/26432293},
  author        = {T. Le},
  journal       = {J. Oper. Theory},
  number        = {1},
  pages         = {pp. 135--158},
  publisher     = {Theta Foundation},
  title         = {Composition Operators Between {S}egal–{B}argmann Spaces},
  urldate       = {2026-04-03},
  volume        = {78},
  year          = {2017},
}

@article{hartz2025weighted,
  title         = {Weighted composition operators on {H}ilbert function spaces on the ball},
  author        = {Hartz, M. and Tornes, M.},
  journal       = {Results Math.},
  volume        = {80},
  number        = {6},
  pages         = {180},
  year          = {2025},
  publisher     = {Springer},
}

@book{mujica2010,
  title         = {Complex Analysis in {B}anach Spaces},
  author        = {Mujica, J.},
  isbn          = {9780486474663},
  lccn          = {2009031304},
  series        = {Dover Books on Mathematics},
  url           = {https://books.google.com/books?id=nQg7BAAAQBAJ},
  year          = {2010},
  publisher     = {Dover Publications},
}

@book{Bogachev1998,
  title         = {Gaussian Measures},
  author        = {V. I. Bogachev},
  url           = {https://api.semanticscholar.org/CorpusID:264752010},
  year          = {1998},
  publisher     = {American Mathematical Soc},
}

@book{Simon1974,
  url           = {http://www.jstor.org/stable/j.ctt13x16st},
  author        = {B. Simon},
  publisher     = {Princeton University Press},
  title         = {P$(\Phi)_2$ {E}uclidean (Quantum) Field Theory},
  urldate       = {2025-02-26},
  year          = {1974},
}

@incollection{MR1770752,
  author        = {Hall, B. C.},
  title         = {Holomorphic methods in analysis and mathematical physics},
  booktitle     = {First {S}ummer {S}chool in {A}nalysis and {M}athematical {P}hysics ({C}uernavaca {M}orelos, 1998)},
  series        = {Contemp. Math.},
  volume        = {260},
  pages         = {1--59},
  publisher     = {Amer. Math. Soc., Providence, RI},
  year          = {2000},
  isbn          = {0-8218-2115-6},
  mrclass       = {81s05 (22e30 32a36 46e22 46n50 81s30)},
  mrnumber      = {1770752},
  mrreviewer    = {Gerald\ B.\ Folland},
  doi           = {10.1090/conm/260/04156},
  url           = {https://doi.org/10.1090/conm/260/04156},
}

@article{Kuo1983BrownianFA,
  title         = {Brownian functionals and applications},
  author        = {H. H. Kuo},
  journal       = {Acta Applicandae Mathematica},
  year          = {1983},
  volume        = {1},
  pages         = {175--188},
  url           = {https://api.semanticscholar.org/CorpusID:122267558},
}

@inproceedings{Hida1976ANALYSISOB,
  title         = {{A}nalysis Of {B}rownian {F}unctionals},
  author        = {T. Hida},
  year          = {1976},
  url           = {https://api.semanticscholar.org/CorpusID:123441258},
}

@article{STREIT198455,
  title         = {Generalized {B}rownian functionals and the {F}eynman integral},
  journal       = {Stoch. Process. Their Appl.},
  volume        = {16},
  number        = {1},
  pages         = {55--69},
  year          = {1984},
  issn          = {0304-4149},
  doi           = {https://doi.org/10.1016/0304-4149(84)90175-3},
  url           = {https://www.sciencedirect.com/science/article/pii/0304414984901753},
  author        = {L. Streit and T. Hida},
}

@inproceedings{Stein1971SingularIA,
  title         = {Singular Integrals and Differentiability Properties of Functions},
  author        = {E. M. Stein},
  year          = {1971},
  url           = {https://api.semanticscholar.org/CorpusID:115462014},
}

@article{Arveson72,
  author        = {Arveson, W.},
  title         = {Subalgebras of {$C^*$}-Algebras {II}},
  journal       = {Acta Math.},
  volume        = {128},
  number        = {3--4},
  pages         = {271--308},
  year          = {1972},
  doi           = {10.1007/bf02392166},
}

@article{Babuska68a,
  author        = {Babu{\v s}ka, I.},
  title         = {{\"U}ber Universal Optimale {Q}uadraturformeln, {T}eil {I}},
  journal       = {Appl. Math.},
  year          = {1968},
  volume        = {13},
  pages         = {304--338},
}

@article{Babuska68b,
  author        = {Babu{\v s}ka, I.},
  title         = {{\"U}ber Universal Optimale {Q}uadraturformeln, {T}eil {II}},
  journal       = {Appl. Math.},
  year          = {1968},
  volume        = {13},
  pages         = {388--404},
}

@incollection{Beurling38,
  author        = {Beurling, A.},
  year          = {1938},
  title         = {Sur les int{\'e}grales de {F}ourier absolument convergentes et leur application {\`a} une transformation fonctionnelle},
  booktitle     = {Arne {B}eurling: Collected Works, Vol. II},
  editor        = {Carleson, L. and Malliavin, P. and Neuberger, J. and Wermer, J.},
  publisher     = {Birkh{\"a}user},
  address       = {Boston},
}

@article{DasGiannakis23,
  author        = {Das, S. and Giannakis, D.},
  title         = {On Harmonic {H}ilbert Spaces on Compact Abelian Groups},
  year          = {2023},
  journal       = {J. Fourier Anal. Appl.},
  volume        = {29},
  number        = {1},
  pages         = {12},
  eid           = {12},
  doi           = {10.1007/s00041-023-09992-4},
}

@article{DasEtAl23,
  author        = {Das, S. and Giannakis, D. and Montgomery, M.},
  title         = {Correction to: On Harmonic {H}ilbert Spaces on Compact Abelian Groups},
  journal       = {J. Fourier Anal. Appl.},
  year          = {2023},
  volume        = {29},
  number        = {6},
  eid           = {67},
  pages         = {67},
  doi           = {10.1007/s00041-023-10043-1},
}

@article{Delvos97,
  author        = {F. J. Delvos},
  title         = {Interpolation in Harmonic {H}ilbert Spaces},
  journal       = {Math. Model. Numer. Anal.},
  year          = {1997},
  volume        = {31},
  number        = {4},
  pages         = {435--458},
  url           = {http://www.numdam.org/item/M2AN_1997__31_4_435_0/},
}

@article{Feichtinger79,
  author        = {Feichtinger, H. G.},
  title         = {Gewichtsfunktionen auf Lokalkompakten {G}ruppen},
  journal       = {{\"O}sterreich. Akad. Wiss. Math.-Natur. Kl. Sitzungsber. II},
  year          = {1979},
  volume        = {188},
  number        = {8--10},
  pages         = {451--471},
}

@article{FeichtingerEtAl07,
  author        = {Feichtinger, H. G. and Pandey, S. S. and Werther, T.},
  title         = {Minimal Norm Interpolation in Harmonic {H}ilbert Spaces and {W}iener Amalgam Spaces on Locally Compact Abelian Groups},
  journal       = {J. Math. Kyoto Univ.},
  year          = {2007},
  volume        = {47},
  number        = {1},
  pages         = {65--78},
  doi           = {10.1215/kjm/1250281068},
}

@article{GiannakisMontgomery25,
  author        = {Giannakis, D. and Montgomery, M.},
  title         = {An Algebra Structure for Reproducing Kernel {H}ilbert Spaces},
  journal       = {Banach J. Math. Anal.},
  volume        = {19},
  year          = {2025},
  eid           = {11},
  doi           = {10.1007/s43037-024-00398-y},
}

@misc{GiannakisEtAl25,
  author        = {Giannakis, D. and Latifi Jebelli, M. J. and Montgomery, M. and Pfeffer, P. and Schumacher, J. and Slawinska, J.},
  title         = {Second quantization for classical nonlinear dynamics},
  year          = {2025},
  url           = {https://arxiv.org/abs/2501.07419v1},
}

@article{Gneiting13,
  author        = {Gneiting, T.},
  title         = {Strictly and non-strictly positive definite functions on spheres},
  journal       = {Bernoulli},
  volume        = {19},
  number        = {4},
  year          = {2013},
  pages         = {1327--1349},
  doi           = {10.3150/12-bejsp06},
}

@incollection{Grochenig07,
  author        = {Gr{\"o}chenig, K.},
  title         = {Weight Functions in Time-Frequency Analysis},
  booktitle     = {Pseudodifferential Operators: Partial Differential Equations and Time-Frequency Analysis},
  publisher     = {American Mathematical Society},
  year          = {2007},
  pages         = {343--366},
  address       = {Providence},
  editor        = {Rodino, L. and others},
  volume        = {52},
  series        = {Fields Inst. Commun.},
}

@book{Kaniuth09,
  author        = {Kaniuth, E.},
  title         = {A Course in Commutative {B}anach Algebras},
  year          = {2009},
  location      = {New York},
  publisher     = {Springer Science+Media},
  series        = {Graduate Texts in Mathematics},
  volume        = {246},
}

@article{TwarequeEtAl95,
  author        = {Twareque Ali, S. and Antoine, J.-P. and Gazeau, J.-P. and Mueller, U. A.},
  title         = {Coherent States and their Generalizations: {A} Mathematical Review},
  journal       = {Rev. Math. Phys.},
  volume        = {7},
  number        = {7},
  pages         = {1013--1104},
  year          = {1995},
  doi           = {10.1142/s0129055x95000396},
}

@article{Wermer54,
  author        = {Wermer, J.},
  title         = {On a Class of Normed Rings},
  journal       = {Ark. Mat.},
  year          = {1954},
  volume        = {2},
  number        = {6},
  pages         = {537--551},
  doi           = {10.1007/bf02591228},
}

\end{document}